\newtheorem{thm}{Theorem}
\newtheorem{lem}{Lemma}
\newtheorem{prop}{Observation}
\definecolor{darkgreen}{rgb}{0,0.5,0}
\newcommand{\vicky}[1]{\textcolor{blue}{[vicky: #1]}}
\newcommand{\out}[1]{\textcolor{darkgreen}{\sout{#1}}} %view changes
\newcommand{\inn}[1]{\textcolor{blue}{#1}}
\newcommand{\cH}{\mathcal{H}}
\newcommand{\cQ}{\mathcal{Q}}
\newcommand{\cB}{\mathcal{B}}
\newcommand{\cC}{\mathcal{C}}
\newcommand{\cO}{\mathcal{O}}
\newcommand{\cU}{\mathcal{U}}
\newcommand{\pp}{\mathbf{p}}
\begin{document}

\title{Characterising and bounding the set of quantum behaviours in contextuality scenarios}

\author{Anubhav Chaturvedi}
\email{anubhav.chaturvedi@research.iiit.ac.in}
\affiliation{Institute of Theoretical Physics and Astrophysics, National Quantum Information Centre, Faculty of Mathematics, Physics and Informatics, University of Gdansk, 80-952 Gdansk, Poland}
\affiliation{International Centre for Theory of Quantum Technologies, University of Gdansk, 80-308, Gdansk, Poland}

\author{M\'at\'e Farkas}
\affiliation{ICFO-Institut de Ciencies Fotoniques, The Barcelona Institute of Science and Technology,  08860 Castelldefels, Spain}

\author{Victoria J Wright}
\affiliation{International Centre for Theory of Quantum Technologies, University of Gdansk, 80-308, Gdansk, Poland}
\affiliation{ICFO-Institut de Ciencies Fotoniques, The Barcelona Institute of Science and Technology,  08860 Castelldefels, Spain}

\begin{abstract}

       The predictions of quantum theory resist generalised noncontextual explanations. In addition to the foundational relevance of this fact, the particular extent to which quantum theory violates noncontextuality limits available quantum advantage in communication and information processing. In the first part of this work, we formally define contextuality scenarios via prepare-and-measure experiments, along with the polytope of general contextual behaviours containing the set of quantum contextual behaviours. {This framework allows us to recover several properties of set of quantum behaviours in these scenarios, including contextuality scenarios and associated noncontextuality inequalities that require for their violation the individual quantum preparation and measurement procedures to be mixed states and unsharp measurements}. With the framework in place, we formulate novel semidefinite programming relaxations for bounding these sets of quantum contextual behaviours. Most significantly, to circumvent the inadequacy of pure states and projective measurements in contextuality scenarios, we present a novel unitary operator based semidefinite relaxation technique. We demonstrate the efficacy of these relaxations by obtaining tight upper bounds on the quantum violation of several noncontextuality inequalities and identifying novel maximally contextual quantum strategies. To further illustrate the versatility of these relaxations, we demonstrate \textit{monogamy of preparation contextuality} in a tripartite setting, and present a secure semi-device independent quantum key distribution scheme powered by quantum advantage in parity oblivious random access codes. 

\end{abstract}
\maketitle

\section{Introduction} 
The Leibnizian methodological principle of ``ontological identity of empirical indiscernibles" \cite{von1956leibniz} creates a bridge across the schism dividing the ``empiricist", and ``realist" viewpoints on physics \cite{spekkens2019ontological}.
Generalised noncontextuality \cite{spekkens2005contextuality} embodies this principle, and serves to characterise operational physical theories that allow for simultaneously realist and Leibnizian explanations.

Although born of metaphysical considerations, noncontextuality has been shown to be equivalent to a natural operational notion of classicality, namely, simplex embeddability of convex general stochastic operational theories \cite{schmid2019characterization,schmid2020structure}. 
It is then not surprising that the quantum \emph{violation of noncontextuality}\footnote{{Quantum theory is said to violate noncontextuality as shorthand for admitting no simultaneously Leibnizian and realist explanation. More explicitly, it cannot be explained by a noncontextual ontological (hidden variable) model \cite{spekkens2005contextuality,spekkens2019ontological,schmid2020unscrambling}}} fuels quantum advantage in broad classes of information processing, and cryptographically significant communication tasks \cite{ExpContext,PhysRevLett.102.010401,chailloux2016optimal,PhysRevLett.102.010401,ghorai2018optimal,Noob1,PhysRevA.100.022108,saha2019state,schmid2018contextual,PhysRevA.100.042116,emeriau2020quantum,yadavalli2020contextuality}.

There is, therefore, both foundational and technological motivation for characterising the peculiar nature of contextual quantum behaviour, and quantifying the extent of contextuality in terms of quantum violation of \emph{noncontextuality inequalities}. In analogy with quantum non-locality \cite{nonlocality}, quantum theory surpasses these noncontextual limitations, but not necessarily to the maximum mathematically possible extent \cite{banik2015limited}. {Bounding this violation provides foundational insights, such as establishing how distinct operationally indistinguishable entities must be in order to explain quantum predictions \cite{Noob70747,marvian2020inaccessible}}, and finds technological applications in capping quantum advantage. Despite this motivation, the nature and the extent of available quantum contextuality remains largely uncharacterised.
\begin{figure}[ht]
    \centering
    \includegraphics[width=\linewidth]{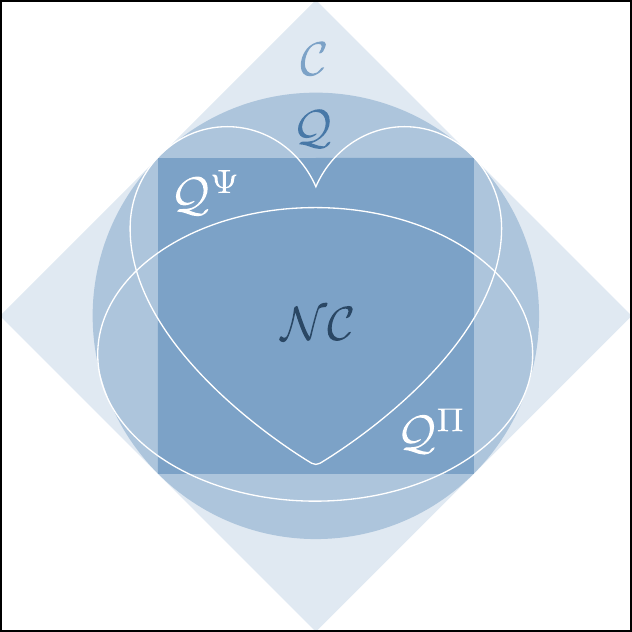} %hehe i see, its a square :P
    \caption{\label{awesome} \textit{The landscape of contextual behaviours:  } The diagram illustrates the relationship between sets of behaviours achievable in different theories in a generic contextuality scenario. In direct analogy with the polytopes of no-signalling and local-causal correlations in Bell experiments, the set of all valid contextual behaviours, $\mathcal{C}$, and the set of noncontextual behaviours, $\mathcal{NC}$, form polytopes. Continuing the analogy, the set of quantum behaviours, $\mathcal{Q}$, although convex, does not form a polytope, and satisfies $\mathcal{NC}\subsetneq\cQ\subsetneq\cC$ in general. We find contextuality scenarios in which the subsets of behaviours obtained from pure quantum states $\mathcal{Q}^\Psi$ (possibly convex), and projective measurements $\mathcal{Q}^{\Pi}$ (convex) form strict subsets of the set of quantum contextual behaviours (see Observations~\ref{QNotEqualsQPi} and~\ref{QNotEqualsQPsi}), thereby demonstrating that in such scenarios mixed quantum states and unsharp measurements can access greater contextuality.}
\end{figure}

In an effort to address this shortcoming, in the first part of this work, we explore and elucidate the rich landscape of contextual behaviours by formally defining \textit{contextuality scenarios} via prepare-and-measure experiments. This formalism allows us to define the sets of noncontextual, quantum and contextual behaviours. We can then prove some basic facts about the quantum set, summarised in a series of seven observations which incorporate the relevant known results from the literature (see Fig.~\ref{awesome} for a brief overview of {these}  findings). 

%, and (ii) the inability of quantum theory to saturate a maximum contextual bound in contextuality scenarios which \emph{cannot} be mapped to Bell scenarios
In the second part of this work, we formulate hierarchies of semidefinite relaxations for bounding the set of quantum contextual behaviours and some natural subsets thereof. {Due to the fact that one cannot a priori restrict to} pure quantum states and projective measurements, in general contextuality scenarios standard techniques such as {the} Navascu\'es--Pironio--Ac\'in hierarchy for nonlocal correlations \cite{navascues2008convergent} (or straightforward modification thereof) cannot be employed. Instead, we formulate novel semidefinite programming techniques tailored to the requirements of contextuality scenarios, with particular emphasis on \textit{efficiency} and \textit{ease of implementation}. Of these, our most significant contribution constitutes a semidefinite relaxation technique which employs moment matrices indexed exclusively by monomials of unitary operators, which may be of independent interest. 
We benchmark these relaxations by recovering maximal quantum violations of several noncontextuality inequalities, and identifying optimal quantum protocols with respect to several noncontextuality inequalities in a diverse selection of contextuality scenarios.
Moreover, equipped with these relaxations, we demonstrate the existence of \textit{monogamy of contextuality} in a tripartite setting. Finally, to exhibit the relevance of these relaxations to real-world applications, we present a secure, \emph{semi-device-independent} one-way quantum key distribution scheme powered by preparation contextuality.

\section{Contextuality scenarios}

\subsection{Prepare-and-measure experiments}

In this work, we consider \textit{{prepare-and-measure} experiments} consisting of a set of $X$ preparation procedures, $\{P_x\}^{X-1}_{x=0}$, and a set of $Y$ measurement procedures, $\{M_y\}^{Y-1}_{y=0}$. Each measurement is described by $K$ \textit{measurement effects}, $\{ [ k | M_y ] \}_{k=0}^{K-1}$, corresponding to $K$ possible outcomes.
The experiment yields the observed statistics (or \textit{behaviour}), $p(k|x,y) \equiv p(k | P_x, M_y)$, i.e., the probability of observing the outcome $k$, given {that} the measurement $M_y$ was performed on the preparation $P_x$. We may write the statistics as a vector $\mathbf{p} \in \mathbb{R}^{XYK}$.

One may view prepare-and-measure experiments as \textit{one-way communication tasks}, wherein the sender (Alice) encodes her input $x\in[X]$ (where $[X]$ denotes the set $\{0,\ldots,X-1\}$)  {into} the preparations $\{P_x\}_x$, and transmits them to a (spatially separated) receiver (Bob). Bob decodes the message by performing a measurement $M_y$, and obtains an outcome $k \in [K]$. The performance of {a behaviour} in such tasks is gauged by {a} linear functional, called {a} \textit{success metric}. These are of the form $S(\mathbf{p})=\mathbf{c}\cdot \mathbf{p}=\sum_{x,y,k}c_{x,y,k} p(k|x,y)$, where $\mathbf{c}\in\mathbb{R}^{XYK}$ is a vector of real coefficients $c_{x,y,k}$. {Accordingly, the performance of an operational theory in such a one-way communication task is measured by maximising the metric $S$ over the behaviours achievable in the theory.}

\subsection{Operational equivalences and contextuality scenarios} 
First, note that we assume {that} the preparation and measurement procedures in our experiment are modelled by some operational theory in which {probabilistic} mixtures of procedures are possible. Hence the preparations, measurements and effects are {elements} of convex sets, denoted {by} $\mathcal{P}$, $\mathcal{M}$ and $\mathcal{E}$, respectively

%\footnote{This assumption is implicit in all formal discussions about contextuality to date.}. 

Experimental tests of generalised contextuality {require the presence of} equivalences between the preparations  and/or measurement effects. In an operational theory, two preparations, $P_1,P_2\in\mathcal{P}$, are said to be \textit{operationally equivalent}, $P_1 \simeq P_2$, if they yield identical statistics for all measurements and outcomes, i.e., $p(k|P_1,M)=p(k|P_2,M)$ for all effects $[k|M]\in\mathcal{E}$. Similarly, two measurement effects, $[k_1|M_1],[k_2|M_2]\in\mathcal{E}$, are operationally equivalent, $[k_1|M_1] \simeq [k_2|M_2]$, if all preparations assign identical probability of occurrence to them, i.e., $p(k_1|P,M_1)=p(k_2|P,M_2)$ for all $P\in\mathcal{P}$. 

We define a \textit{contextuality scenario} as a prepare-and-measure experiment with certain operational equivalences imposed on the involved preparations and measurements. A scenario is therefore identified by a five-tuple, $T \equiv (X,Y,K,\mathcal{OE}_P,\mathcal{OE}_M)$ where the first three elements specify the number of preparations, measurements, and effects, while $\mathcal{OE}_P$ and $\mathcal{OE}_M$ describe operational equivalences of preparations, and measurement effects, respectively.

A single preparation equivalence in a contextuality scenario consists of two distinct decompositions of an identical \emph{hypothetical}\footnote{{The mixtures of preparations and/or measurement effects that serve to ensure the operational equivalences are referred to as `hypothetical', as they {generally do} not explicitly feature in the experiment.}}
 preparation $Q\in\mathcal{P}$ in terms of the experimental preparations~$P_x$,
\begin{equation}
Q\simeq\sum_{x\in[X]}\alpha_1(x)P_x\simeq\sum_{x\in[X]}\alpha_2(x)P_x,
\end{equation}
where $\alpha_{1}$ and $\alpha_2$ are distinct probability distributions on~$[X]$ corresponding to the two distinct convex decompositions of $Q$. 

In general, a contextuality scenario comprises $V$ distinct hypothetical preparations $\{Q_v\}_{v\in[V]}$, such that each hypothetical preparation $Q_v$ is decomposed into $V_v\geq 2$ distinct mixtures of the experimental preparations $\{P_x\}$ for each $v\in[V]$. Thus, for any given contextuality scenario, all operational equivalences of preparations are summarised as,
\begin{equation}\label{OEP}
   \forall v\in[V], \forall j\in[V_v]: Q_v \simeq\sum_{x\in[X]}  \alpha^v_j(x)P_x,
\end{equation}
where the index $j$ iterates over the $V_v\geq2$ distinct convex decompositions of a given hypothetical preparation $Q_v$, and $\alpha^v_j$ is a probability distribution on $[X]$ specifying the convex decomposition for each $v\in[V]$ and $j\in[V_v]$. Then $\mathcal{OE}_P$ is defined to be the set  $\big\{\{\alpha^v_j\}_{j\in[V_v]}\big\}_{v\in[V]}$, which uniquely specifies the equivalences \eqref{OEP}. 

Likewise, the set $\mathcal{OE}_M =\big\{\{\beta^w_j\}_{j\in[W_w]}\big\}_{w\in[W]}$ determines the operational equivalences of measurement effects\footnote{{Note that the effects of multiple measurements always satisfy a trivial operational equivalence deriving from the completeness of the measurements, i.e., that the probabilities of all the outcomes of a measurement sum to unity. We consider this equivalence part of the definition of a measurement and thus do not include it in the set $\mathcal{OE}_M$. The notation $\mathcal{OE}_M=\emptyset$ therefore signifies the presence of only these trivial measurement equivalences.}}, which are summarised as
\begin{equation} \label{OEM}
\begin{gathered}
    \forall w\in[W],\forall j\in[W_w]: \\  {[k_w|N_w]} \simeq \sum_{k\in[K],y\in[Y]} \beta^w_j(k,y)[k|M_y],
\end{gathered}
\end{equation}
where $\{ {[k_w|N_w]}\}_{w\in[W]}$ is a set of $W$ hypothetical effects in $\mathcal{E}$, the index $j$ iterates over {the} $W_w\geq2$ distinct convex decompositions of a given hypothetical effect $[k_w|N_w]$, and $\beta^j_w$ is a probability distribution on $[K]\times[Y]$, specifying the convex decomposition for each $w\in W$ and $j\in W_w$\footnote{We have formulated the equivalences as in Eqs.~\eqref{OEP} and \eqref{OEM}---as opposed to pairwise---to enable easy distinction between scenarios which can be mapped to Bell scenarios, and those which cannot. Specifically, only contextuality scenarios with a single hypothetical preparation, $V=1$, and only trivial equivalences of measurements, $W=0$ can be mapped to Bell scenarios \cite{schmid2018contextual}.}.

\subsection{Contextual polytope}

The operational equivalences \eqref{OEP} and \eqref{OEM} yield the following constraints on any behaviour~$\mathbf{p}$ appearing in a given contextuality scenario:
\begin{equation}
\begin{gathered}
\label{OEPO}
\forall y\in[Y],k\in[K],v\in[V],\forall j\in[V_v]: \\
q_v=\sum_x \alpha^v_j(x)p(k|x,y)\,, \\ 
\forall x\in [X],w\in[W],\forall j\in[W_w]: \\ 
e_w=\sum_{k,y} \beta^w_j(k,y)p(k|x,y)
\end{gathered}
\end{equation}
where $q_v,e_w \in [0,1]$. As these constraints are linear on the observed behaviour, the set of behaviours $\mathbf{p}$ that satisfy these constraints forms a convex polytope $\mathcal{C} \subset \mathbb{R}^{XYK}$, referred to as the \textit{contextual polytope} (note that this set depends on the tuple $T$ specifying the contextuality scenario, but we will not reflect this in the notation for simplicity). Moreover, for any given contextuality scenario, evaluating the maximal value, $S_{\mathcal{C}}=\max_{\mathbf{p}\in \mathcal{C}} \{ S(\mathbf{p}) \}$, of any success metric $S$ on this set constitutes a linear program \cite{boyd_vandenberghe_2004}.
%\footnote{While the contextual polytope resembles the no-signalling polytope in formulation, nontrivial contextual behaviours obtained in prepare-and-measure experiments---unlike those obtained in Bell scenarios---typically find themselves in the signalling sector of the probability polytope.} 

\subsection{Noncontextual polytope and inequalities} 

A noncontextual ontological model for an operational theory consists of the following {three} elements: (i) a measurable space, $\Lambda$, known as the \emph{ontic state space}, (ii) a probability measure $\mu_P$ on $\Lambda$ describing the \emph{epistemic state} of the system for each preparation $P{\in\mathcal{P}}$, satisfying $\mu_P=\mu_{P'}$ if $P\simeq P'$, and (iii) {for every ontic state $\lambda\in\Lambda$ and measurement $M\in\mathcal{M}$, a probability distribution $\xi_M(\cdot|\lambda)$ over the possible outcomes of $M$, referred to as a \emph{response scheme},} satisfying $\xi_M(k|\cdot)=\xi_{M'}(k'|\cdot)$ if $[k|M]\simeq[k'|M']$.

Consequently, in a given contextuality scenario with operational equivalences of the form \eqref{OEP}, noncontextuality imposes the following constraints on the epistemic states and response schemes,
\begin{equation} \label{OEPOEMMOntic}
\begin{gathered}
\forall v\in[V], \forall j\in[V_v]: \\
\mu_{Q_v}{=}\sum_{x\in[X]}  \alpha^v_j(x)\mu_{P_x}\,, \\ 
\forall w\in[W],\forall j \in[W_w]: \\
{\xi_{N_w}(k_w|\cdot)=}\sum_{k,y} \beta^w_j(k,y)\xi_{M_y}(k|{\cdot}){\,.}
  \end{gathered}
\end{equation}

For any given contextuality scenario, the set of behaviours $\mathbf{p}$ that admit a noncontextual explanation, i.e., for which $p(k|x,y)=\int_\Lambda\xi_{M_y}(k|\lambda)\mu_{P_x}(\lambda)d\lambda$ for some epistemic states and response schemes satisfying the constraints \eqref{OEPOEMMOntic}, forms a convex polytope, $\mathcal{NC} \subset \mathbb{R}^{XYK}$, referred to as the \textit{noncontextual polytope}  \cite{PhysRevA.97.062103}. Consequently, {finding the maximal value, $S_{\mathcal{NC}}=\max_{\mathbf{p}\in \mathcal{NC}} \{ S(\mathbf{p}) \}$, of any success metric $S$ on the noncontextual ontological behaviours}, also constitutes a linear program \cite{PhysRevA.97.062103}. In general, $\mathcal{NC}\subsetneq \mathcal{C}$, which can be witnessed by some success metric. Inequalities of the form $S(\mathbf{p})\leq S_{\mathcal{NC}}$ are typically referred to as \textit{noncontextuality inequalities}, and their violation $S(\mathbf{p}) > S_{\mathcal{NC}}$ implies that the observed behaviour is {contextual}, i.e., $\mathbf{p}\in \mathcal{C} \setminus \mathcal{NC}$. 

\subsection{Quantum contextual set and subsets}

In quantum theory, both preparations and measurement effects are elements of the set of bounded positive semidefinite operators $\mathcal{B}_+(\mathcal{H})$ on some {separable} Hilbert space $\mathcal{H}$ that we assume to be finite dimensional. Preparations are described by \textit{density operators}, 
%$\tr( \rho_x ) = 1$
{ i.e., elements of $\mathcal{B}_+(\mathcal{H})$ with unit trace}.
Measurements are defined by {sets $\{M_k\in\mathcal{B}_+(\mathcal{H})\}_k$ of} 
\textit{measurement operators} {such that $\sum_k M_k = \mathbb{I}$, where $\mathbb{I}$ is the identity operator on $\mathcal{H}$. Such a set is often referred to as a \textit{positive-operator-valued measure} (POVM)}.
The operational equivalences \eqref{OEP} and \eqref{OEM} imply the following constraints on quantum states {$\{\rho_x\}_{x=0}^{X-1}$} and measurements {$\{\{M^y_k\}_{k=0}^{K-1}\}_{y=0}^{Y-1}$ in a contextuality scenario}:
\begin{equation}
    \begin{gathered}
\label{OEPQ} 
\forall v\in[V],\forall j\in[V_v]: \\\sigma_{v}=\sum_x \alpha^v_j(x)\rho_x, \\ 
\forall w\in[W],\forall j\in[W_w]: \\ F_w = \sum_{k,y} \beta^w_j(k,y)M^y_k,
    \end{gathered}
\end{equation}
where $\sigma_v$ are hypothetical quantum states and $F_w$ are hypothetical measurement operators. The set of behaviours with a quantum realisation, i.e., those in which $p(k|x,y)=\tr(\rho_xM^y_k)$ for some quantum states {$\{\rho_x\}_{x=0}^{X-1}$} and measurements {$\{\{M^y_k\}_{k=0}^{K-1}\}_{y=0}^{Y-1}$} satisfying the constraints \eqref{OEPQ}, form the \emph{quantum contextual} set $\mathcal{Q}$. 

It is known that there exist contextuality scenarios in which $\cQ\subsetneq\cC$, specifically those which can be mapped to Bell scenarios \cite{schmid2018contextual} (where this statement is equivalent to quantum theory not being maximally non-local). We demonstrate that this strict inclusion continues to hold in more general scenarios, for example, see Table~\ref{SchidtSpekkens}.

    Apart from $\mathcal{Q}$ itself, we study two natural subsets of it: (i) the set of behaviours realised by pure quantum states $\rho^2_x = \rho_x$, denoted by $\mathcal{Q}^{\Psi}$, and (ii) the set of behaviours realised by projective (sharp) measurements, i.e., those in which $(M^y_k)^2 = M^y_k$ denoted by $\mathcal{Q}^{\Pi}$.

{We now collect some observations regarding the set of quantum behaviours in contextuality scenarios expressed in the notation we have laid out. Some of the{se} observations are already known or partially known, as indicated subsequently.} 

\begin{prop} \label{QisConvex}
For all contextuality scenarios the set of quantum behaviours $\mathcal{Q}$ is convex.
\end{prop}
\begin{prop} \label{QPiisConvex}
For all contextuality scenarios the subset of quantum behaviours $\mathcal{Q}^{\Pi}$ is convex.
\end{prop}
\begin{prop} \label{NCequalsCBezPrepEq}
For all contextuality scenarios with only trivial equivalences of preparations, $T = (X,Y,K,\emptyset,\mathcal{OE}_M)$, all behaviours are noncontextual, i.e., $\mathcal{NC}=\mathcal{Q}=\mathcal{C}$. 
\end{prop}
\begin{prop} \label{NCequalsCLessThanThreePreps}
For all contextuality scenarios with three or fewer preparation procedures, $T = (X \le 3,Y,K,\mathcal{OE}_P,\mathcal{OE}_M)$, all behaviours are noncontextual, i.e., $\mathcal{NC}=\mathcal{Q}=\mathcal{C}$. 
\end{prop}
\begin{prop} \label{QEqualsQPi}
For all contextuality scenarios with only trivial equivalences of measurement effects $T\equiv (X,Y,K,\mathcal{OE}_P,\emptyset )$, projective measurements are sufficient to recover the quantum set, i.e., $\mathcal{Q}=\mathcal{Q}^\Pi$.
\end{prop}
\begin{prop} \label{QNotEqualsQPsi}
In general, for contextuality scenarios with non-trivial equivalences of preparations, pure states cannot produce all quantum contextual behaviours, i.e., $\mathcal{Q}^\Psi \subsetneq \mathcal{Q}$.
\end{prop}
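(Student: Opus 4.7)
The plan is to prove this existence claim by exhibiting a single contextuality scenario $T$ with non-trivial preparation equivalences, together with a linear success metric $S$ for which $\max_{\mathbf{p}\in\mathcal{Q}^\Psi} S(\mathbf{p}) < \max_{\mathbf{p}\in\mathcal{Q}} S(\mathbf{p})$, and then constructing an explicit mixed-state quantum realisation attaining the larger value. The induced behaviour $\mathbf{p}$ then lies in $\mathcal{Q}\setminus\mathcal{Q}^\Psi$ and witnesses the strict inclusion.

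The structural observation driving the choice of $T$ is that, under the pure-state restriction, every decomposition $\sigma_v = \sum_x \alpha^v_j(x)\rho_x$ of a hypothetical state $\sigma_v$ becomes a \emph{pure-state ensemble} of $\sigma_v$: every $\rho_x$ appearing with non-zero weight must lie in the support of $\sigma_v$, and several decompositions sharing the same hypothetical state impose tight joint rank-and-support constraints on the $\{\rho_x\}$. I would therefore design $T$ with an over-determined equivalence structure (for example, multiple inequivalent decompositions of the same $\sigma_v$, or several hypothetical states linked through shared $\rho_x$), chosen so that the conjunction of these constraints is incompatible with the optimum of a well-chosen inequality.

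To certify a pure-state upper bound, I would adapt the semidefinite relaxations developed later in the paper by augmenting the moment matrices with the polynomial identity $\rho_x^2 = \rho_x$, imposed as a rank-one condition on the preparation blocks in the spirit of a polynomial optimisation hierarchy. A sufficiently high level of this modified hierarchy yields a certified numerical bound $B$ on $\max_{\mathbf{p}\in\mathcal{Q}^\Psi} S(\mathbf{p})$. Complementarily, I would exhibit an explicit low-dimensional quantum realisation (qubit or qutrit) in which at least one $\rho_x$ is a proper mixture, the preparation equivalences are satisfied exactly, and the measured value of $S$ strictly exceeds $B$.

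The main obstacle is the design of the scenario itself. Many natural scenarios admit optimal pure-state strategies, because purifying individual preparations onto a larger Hilbert space preserves the measurement statistics. This escape route is obstructed here precisely because purification does not commute with convex combination: the purification of a mixture is a single pure state that does not decompose into purifications of the summands, so an over-determined set of equivalences cannot in general be neutralised by embedding in a larger space. Locating the smallest contextuality scenario and simplest associated inequality for which this genuinely happens is the delicate step; once located, the remainder of the proof is a numerical certification paired with an analytic witness.
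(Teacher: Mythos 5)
There is a genuine gap: the proposition is an existence claim, and your proposal never actually produces the witnessing scenario. You correctly identify the right proof shape (a scenario with non-trivial preparation equivalences plus a success metric $S$ with $\max_{\mathbf{p}\in\mathcal{Q}^\Psi}S(\mathbf{p})<\max_{\mathbf{p}\in\mathcal{Q}}S(\mathbf{p})$), but you defer ``locating the smallest contextuality scenario and simplest associated inequality'' as the delicate step --- and that step \emph{is} the content of the proposition. Your structural observation about support and rank constraints on pure-state ensembles is true but does not by itself yield a separation, and the certification route you sketch is shaky: imposing $\rho_x^2=\rho_x$ as a ``rank-one condition on the preparation blocks'' is a non-convex constraint that cannot be added to an SDP directly (the paper instead bounds $\mathcal{Q}^\Psi$ with an NPA-style hierarchy whose moment matrix is hinged on a pure state, so that it is a Gram matrix of vectors $O\ket{\psi}$).

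The missing idea that makes the example essentially free is extremality. Take the $(2,2)$-\emph{porac} scenario with the single equivalence $P_{11}\simeq\frac12 P_{01}+\frac12 P_{10}$, i.e.\ one of the two decompositions of the hypothetical preparation is a point mass on $P_{11}$ itself. If all preparations are pure, then $\rho_{11}$ is an extreme point of the convex set of density operators, so the identity $\rho_{11}=\frac12\rho_{01}+\frac12\rho_{10}$ forces $\rho_{01}=\rho_{10}=\rho_{11}$: three of the four states collapse to a single state, Bob can extract no information distinguishing them, and a one-line argument caps the pure-state value at $S_{\mathcal{Q}^\Psi}=3/4$. Mixed states satisfy the same equivalence while remaining distinct and attain $7/8$ (matched by the upper bound $S_{\mathcal{Q}_1}$ from the hierarchy), so $\mathcal{Q}^\Psi\subsetneq\mathcal{Q}$. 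No rank-constrained optimisation is needed for the pure-state bound; convexity does all the work. Your intuition that purification fails to commute with convex combination is exactly the right reason this escape route is blocked, but you needed to push it to the degenerate case where the hypothetical preparation coincides with an actual (pure) preparation.
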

\begin{prop} \label{QNotEqualsQPi}
In general, for contextuality scenarios with non-trivial equivalences of measurement effects, projective measurements cannot produce all quantum contextual behaviours, i.e., $\mathcal{Q}^\Pi \subsetneq \mathcal{Q}$.
\end{prop}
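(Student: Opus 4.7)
The plan is to establish the strict inclusion by exhibiting a single explicit contextuality scenario with a non-trivial measurement equivalence together with a success metric $S$ whose optimal quantum value cannot be attained by any projective-measurement realisation. Since $\mathcal{Q}^{\Pi} \subseteq \mathcal{Q}$ is automatic, one concrete separation suffices.

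First, I would fix a minimal scenario with, say, three binary-outcome measurements $\{M_y\}_{y=0}^{2}$ and one non-trivial measurement equivalence of the form $E \simeq \tfrac{1}{3}\sum_{y}[0|M_y] \simeq \tfrac{1}{3}\sum_{y}[1|M_y]$, which at the level of operators imposes $\sum_{y} M^y_{0} = \sum_{y} M^y_{1} = \tfrac{3}{2}\,\mathbb{I}$. Under a POVM realisation this is easily met by three rank-one trine effects on a qubit, whereas any projective realisation with rank profile $(r_0,r_1,r_2)$ on a $d$-dimensional space must satisfy $r_0+r_1+r_2 = \tfrac{3}{2}d$, forcing $d$ to be even and heavily constraining the relative geometry of the projectors. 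I would then pair this with a small family of preparations subject to a mild preparation equivalence and engineer a linear success metric $S$ rewarding distinguishability of the preparations via this triple of measurements.

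Second, I would exhibit an explicit qubit realisation achieving value $S^\star$ using trine POVMs and a symmetric pure-state encoding, and verify by direct calculation that $S^\star > S_{\mathcal{NC}}$, so that $S^\star$ is an actual point of $\mathcal{Q}$. The core task is to upper-bound $\max_{\mathbf{p}\in\mathcal{Q}^{\Pi}} S(\mathbf{p})$ strictly below $S^\star$. I would attack this in two complementary ways: analytically, by exploiting the rigidity of the projective equivalence (a commutant decomposition reduces any projective solution to a direct sum of low-dimensional blocks in which the three projectors are forced into an essentially unique symmetric configuration that is strictly suboptimal for $S$); and numerically, by using the SDP relaxation of the following section after adjoining the polynomial relations $(M^y_k)^2 = M^y_k$ to the moment matrix, which furnishes a rigorous outer approximation of $\mathcal{Q}^{\Pi}$.

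The main obstacle is this last step. Whereas $\mathcal{Q}$ has a clean SDP outer approximation, the set $\mathcal{Q}^{\Pi}$ is not a spectrahedron in the operator variables and the projectivity constraints are nonlinear, so the resulting hierarchy is more delicate to set up and harder to certify convergence for. Demonstrating a strict gap therefore hinges on extracting a certificate either from the rigidity of the analytic reduction or from a dual solution of the augmented SDP. Once $S_{\mathcal{Q}^{\Pi}} < S^\star \le S_{\mathcal{Q}}$ is certified for this one scenario, the proposition follows.
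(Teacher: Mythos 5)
There is a genuine gap in your proposal, and it sits exactly where you flag the ``main obstacle'': you never actually certify $S_{\mathcal{Q}^{\Pi}} < S^\star$, which is the entire content of the proposition. Worse, the scenario you choose is unlikely to exhibit a separation at all. Your measurement equivalence forces $\sum_{y} M^y_0 = \tfrac{3}{2}\mathbb{I}$, and on a qubit this is satisfied \emph{exactly} by three rank-one trine projectors (three Bloch vectors summing to zero are pairwise at $120^\circ$). So the ``rigidity'' you invoke does not exclude projective realisations — it essentially singles out a projective realisation, namely the same trine configuration you propose to use on the POVM side. Without a concrete success metric and a certified strict gap between the best projective and best POVM values (neither your commutant-decomposition sketch nor the augmented SDP is carried out, and you concede the latter is delicate), the argument does not close.

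The paper takes a much more economical route: it chooses a \emph{three-outcome} measurement equivalence, $\tfrac12([k|M_0]+[k|M_1]) \simeq \tfrac{\mathbb{I}}{3}$ for each $k\in\{0,1,2\}$ in a $(2,2,3)$ scenario built on a $(2,3)$-\emph{mporac} task. This equivalence is algebraically \emph{incompatible} with projectivity: writing $M^1_0 = \tfrac{2}{3}\mathbb{I} - M^0_0$ and imposing $(M^0_0)^2 = M^0_0$ and $(M^1_0)^2 = M^1_0$ forces $M^0_0 = \tfrac{\mathbb{I}}{3}$, which is not a projector. Hence $\mathcal{Q}^{\Pi}$ is empty in that scenario, and the proposition follows from exhibiting any POVM strategy achieving a contextual value (the paper gives one with $S_{\mathcal{Q}} \approx 0.5258 > S_{\mathcal{NC}} = 0.5$). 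If you want to salvage your approach, the lesson is to engineer the equivalence so that projectivity is ruled out (or severely degraded) by an elementary operator identity, rather than hoping to win a quantitative optimisation contest against projective measurements that satisfy the same constraints.
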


The proofs have been deferred to Appendix~\ref{app:proofs} for brevity. {In Observation~\ref{NCequalsCBezPrepEq}, the fact that $\mathcal{NC}=\mathcal{Q}$ follows from Spekkens' original work \cite{spekkens2005contextuality}. The second part of the result, $\mathcal{Q}=\mathcal{C}$, can be seen as a generalisation (from one preparation to multiple preparations) of the fact that in a hypergraph contextuality scenario all probabilistic models trivially have a quantum realisation when one allows for POVMs, as discussed in Ref.~\cite{kunjwal2019beyond}. In Observation~\ref{NCequalsCLessThanThreePreps}, the fact that $\mathcal{NC}=\mathcal{Q}$ was shown for two binary measurements in Ref.~\cite{SimplePusey} and generally in Ref.~\cite{Noob70747}. Finally, Observation~\ref{QNotEqualsQPi} is implied by the connection between compatibility scenarios and Kochen--Specker contextuality scenarios \cite{CompatibilityToKS} (which can further be translated to generalised contextuality scenarios \cite{KStoSpekkens}), together with the fact that certain compatibility scenarios necessitate unsharp measurements \cite{JointMeasurabilityPoVm}}. These observations yield a rich landscape of contextual behaviours (see Fig.~\ref{awesome}). Due to the fact that the Hilbert space dimension of the quantum systems remains unbounded in contextuality scenarios\footnote{In general, one cannot expect to be able to bound the dimension of the quantum systems necessary for the maximal violation of a given noncontextuality inequality. This is due to known connections between contextuality scenarios and nonlocal scenarios \cite{LIANG20111,schmid2018contextual,PhysRevA.97.062103}, and the fact that there exist nonlocal correlations that require infinite dimensional quantum systems \cite{CS18}.}, evaluating the maximum value of a generic success metric over all quantum realisations, $S_\mathcal{Q}=\max_{\mathbf{p}\in \mathcal{Q}}\{S( \mathbf{p} )\}$, is a remarkably hard problem. However, internal maximisation techniques, such as the see-saw semidefinite programming method, yield efficient dimension-dependent lower bounds, $S_\mathcal{Q} \ge S_{\mathcal{Q}_L}$, for any success metric $S$.

\section{Semidefinite programming relaxations of quantum contextual sets}
In order to outer approximate the set $\mathcal{Q}$ and provide upper bounds on success metrics, we formulate \textit{semidefinite relaxation} techniques. A crucial concept underpinning these relaxation techniques is that of \textit{moment matrices}.
For a given density operator, $\rho$, and a sequence of linear operators, $\mathcal{O}=(O_1,O_2,\ldots)$, on $\mathcal{H}$, the \emph{moment matrix hinged on $\rho$}, $\Gamma^{\cO}_\rho$, is a matrix with elements\footnote{{Note that in the index of $(\Gamma^{\cO}_\rho)_{O_j,O_k}$, $O_j$ and $O_k$ represent unique labels of the operators from the sequence $\cO$, not the operators themselves.}} $(\Gamma^{\cO}_\rho)_{O_j,O_k}{\equiv(\Gamma^{\cO}_\rho)_{j,k}}=\tr(\rho {O_j}^\dagger {O_k})$ for $O_j,O_k\in\mathcal{O}$. Due to the positivity of $\rho$, any such moment matrix is positive semidefinite\footnote{Any positive operator $\rho$ can be written as its spectral decomposition $\rho = \sum_k \lambda_k \ketbra{\psi_k}{\psi_k}$, where $\ketbra{\psi_k}{\psi_k}$ is the rank-1 projection onto the span of $\ket{\psi_k} \in \mathcal{H}$, and $\lambda_k \ge 0$ for all $k$. The moment matrix hinged on $\ketbra{\psi_k}{\psi_k}$, $\Gamma^{\cO}_k$, is the Gram matrix of the vectors $\{ O \ket{ \psi_k}\}_{O \in \mathcal{O}}$, and is therefore positive semidefinite. It follows that $\Gamma^{\cO}_\rho = \sum_k \lambda_k \Gamma^{\cO}_k$ is also positive semidefinite.}, $\Gamma^{\cO}_\rho\geq 0$. 

We may use the existence of such moment matrices, satisfying additional constraints, as a necessary condition for a behaviour to be contained in the quantum set in a given contextuality scenario, as follows. Given a contextuality scenario, $(X,Y,K,\mathcal{OE}_P,\mathcal{OE}_M)$, and a behaviour, $\mathbf{p}\in\cQ$, let $\{\rho_x\}_{x\in[X]}$ be density operators and $\{\{M^y_k\}_{k\in[K]}\}_{y\in[Y]}$ be POVMs, on some finite dimensional Hilbert space $\cH$, satisfying \eqref{OEPQ}, respectively, such that $p(k|x,y)=\tr(\rho_xM^y_k)$. It follows that for any finite sequence of linear operators $\cO$ on $\cH$, the moment matrices $\Gamma_x^\cO \equiv \Gamma_{\rho_x}^\cO$ satisfy
\begin{eqnarray} \label{OEPG}
\forall v\in[V], \forall j \in [V_v]: \Theta_v=\sum_x \alpha_j^v(x) \Gamma^{\cO}_x, 
\end{eqnarray}
where $\{\Theta_v\}$ are hypothetical moment matrices hinged on the hypothetical quantum states $\{\sigma_v\}$ from Eq.~\eqref{OEPQ}, which {generally do} not appear in the program\footnote{Notice that the constraints \eqref{OEPG} on the moment matrices form a direct semantic extension of the operational equivalences of preparations in \eqref{OEP} and \eqref{OEPQ}. These constraints are therefore independent of the list of operators indexing the moment matrices.}. 

If we take $\cO$ to be a sequence of monomials of the operators $M^y_k$ (that includes all the length-zero ($\mathbb{I}$) and length-one ($M^y_k$) monomials), then we also find that the moment matrices $\Gamma_x$ satisfy the constraints $(\Gamma^\cO_x)_{\mathbb{I},\mathbb{I}}=1$ and $(\Gamma^\cO_x)_{\mathbb{I},M^y_k}=p(k|x,y)$ for all $x\in[X]$, $y\in[Y]$ and $k\in[K]$. Thus, the existence of positive semidefinite matrices $\Gamma_x^\cO$ for all $x\in[X]$ satisfying these constraints and \eqref{OEPG} is a necessary condition for $\mathbf{p}$ to be in the quantum set, and constitutes a \textit{semidefinite feasibility problem}. {These problems can be solved using \textit{semidefinite programs} (SDP), for which efficient algorithms exist~\cite{boyd_vandenberghe_2004}.} However, using only the constraints above, the diagonal terms of the matrices $\Gamma_x^\cO$ remain unbounded. This unboundedness renders the above necessary condition for $\mathbf{p}$ to be in $\cQ$ trivial. When bounding the set of quantum correlations in Bell scenarios, this issue can be resolved by restricting to projective measurement operators \cite{navascues2008convergent}, resulting in extra constraints on the diagonal elements. 

{However, in contextuality scenarios we cannot assume that our measurement operators are projections, since there are quantum behaviours that cannot be realised with projective measurements (see Observation \ref{QNotEqualsQPi}). This fact poses one of the main challenges in formulating semidefinite relaxations for contextuality scenarios.}
%However, one of the main challenges in formulation of such relaxations for contextuality scenarios arises from Observation \ref{QNotEqualsQPi}, i.e., we cannot assume that our measurement operators are projections since there are quantum behaviours that cannot be realised with projective measurements.
To tackle this issue, we make use of the following lemma\footnote{Note that there exist methods to tackle such scenarios in the literature on non-commuting polynomial optimisation, entailing introduction of additional moment matrices hinged on the positive semidefinite measurement operators $M^y_k$ (localisation) \cite{boyd_vandenberghe_2004,wittek2015algorithm,mironowicz2018applications}, however, the resultant SDPs are rather tedious to implement, and are relatively inefficient.}.
\begin{lem}\label{unitarylemma} Any operator $0 \le M \le \mathbb{I}$ on a finite dimensional Hilbert space $\mathcal{H}$ can be written as $M=\frac{\mathbb{I}}{2}+\frac{U+U^\dagger}{4}$, where $U$ is a unitary operator on $\mathcal{H}$.
\end{lem}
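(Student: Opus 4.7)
The plan is to reduce the claim to the well-known fact that any Hermitian contraction is the real part of a unitary. Rearranging the desired identity, we see that $M = \tfrac{\mathbb{I}}{2} + \tfrac{U+U^\dagger}{4}$ is equivalent to $\tfrac{U+U^\dagger}{2} = 2M - \mathbb{I}$. So I would set $A := 2M - \mathbb{I}$ and aim to exhibit a unitary $U$ with $A = \tfrac{1}{2}(U + U^\dagger)$.

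First I would observe that the hypothesis $0 \le M \le \mathbb{I}$ makes $A$ Hermitian and its spectrum lies in $[-1,1]$, which in operator inequality form reads $-\mathbb{I} \le A \le \mathbb{I}$. Consequently $\mathbb{I} - A^2 \ge 0$ (every eigenvalue of $A^2$ lies in $[0,1]$), so by the continuous (Borel) functional calculus on the finite-dimensional self-adjoint operator $A$, a positive square root $B := \sqrt{\mathbb{I} - A^2}$ is well defined, Hermitian, and commutes with $A$ since it is a polynomial/continuous function of $A$.

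Next I would define $U := A + iB$ and verify the two required properties by direct computation. Unitarity follows from $UU^\dagger = (A + iB)(A - iB) = A^2 + B^2 + i(BA - AB) = A^2 + (\mathbb{I} - A^2) = \mathbb{I}$, using $[A,B]=0$; the identity $U^\dagger U = \mathbb{I}$ is obtained analogously. Moreover $\tfrac{1}{2}(U + U^\dagger) = \tfrac{1}{2}((A + iB) + (A - iB)) = A$, so substituting $A = 2M - \mathbb{I}$ and solving for $M$ yields exactly $M = \tfrac{\mathbb{I}}{2} + \tfrac{U + U^\dagger}{4}$, as desired.

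There is no real obstacle here beyond ensuring that the functional calculus step is justified; the only point that could trip one up is forgetting that $A$ and $\sqrt{\mathbb{I} - A^2}$ must commute, which is what guarantees that the cross terms vanish in $UU^\dagger$. In finite dimension this is immediate by simultaneous diagonalisation of $A$, so the whole argument is elementary and self-contained.
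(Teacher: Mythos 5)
Your proof is correct and is essentially the same construction as the paper's: the paper diagonalises $M$ explicitly and sets the eigenvalues of $U$ to $\mathrm{e}^{\mathrm{i}\alpha_j}$ with $\cos\alpha_j = 2\lambda_j - 1$, which is exactly your $U = A + i\sqrt{\mathbb{I}-A^2}$ written in the eigenbasis. Your coordinate-free phrasing via the functional calculus is a fine, equivalent packaging of the same idea.
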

The proof is straightforward, and has been deferred to Appendix \ref{app:proofs}. {This lemma allows us to introduce the following modification of standard semidefinite relaxation techniques.} Instead of taking $\cO$ to be monomials of measurement operators $M^y_k$, we take $\cO$ to be monomials of unitary operators $U^y_k$ (and ${U^y_k}^\dagger$) (along with $\mathbb{I}$) such that $M^y_k=\frac{\mathbb{I}}{2}+\frac{1}{4}(U^y_k+{U^y_k}^\dagger)$. The resulting moment matrices, $\Gamma_x^\cO$, still satisfy \eqref{OEPG}, and additionally we now have 
\begin{equation}\label{eq:gamu}
(\Gamma^\cO_x)_{\mathbb{I},U^y_k}+(\Gamma^\cO_x)_{\mathbb{I},{U^y_k}^\dagger}=4\left(p(k|x,y)-\frac{1}{2}\right)\,
\end{equation}
and
\begin{equation}\label{eq:diag1}
(\Gamma^\cO_x)_{O,O} = 1\,,
\end{equation}
for all $x \in [X]$ and $O \in \mathcal{O}$. Imposing this final condition alleviates the problem of having unbounded diagonal terms.

Firstly, we consider the sequence of monomials of length at most one, i.e., the sequence $\cU_1=(\mathbb{I})\mathbin\Vert (U^y_k,{U^y_k}^\dagger)^{Y-1,K-2}_{y=0,k=0}$ where $\mathbin\Vert$ denotes the concatenation of sequences (due to the relations $\sum_k M^y_k = \mathbb{I}$, we do not need to include the last unitary operators $U^y_{K-1},({U^y_{K-1}})^\dagger$ in $\mathcal{U}_{1}$). In addition to Eqs.~\eqref{OEPG}, \eqref{eq:gamu}, and \eqref{eq:diag1} with $\cO=\cU_1$, the measurement equivalences $\mathcal{OE}_M$ imply the following additional constraints:

\begin{align} \label{OEMG} \nonumber
&\forall x\in [X],O\in \mathcal{U}_{1},w\in[W],\forall j\in[W_w]: \\ \nonumber
&q_{w}=\sum_{k,y} \beta^{w}_j(k,y)((\Gamma^{\cU_1}_x)_{O,U^y_k}+(\Gamma^{\cU_1}_x)_{O,{U^y_k}^\dagger}), \\
&n_w=\sum_{k,y} \beta^{w}_j(k,y)((\Gamma^{\cU_1}_x)_{U^y_k,O}+(\Gamma^{\cU_1}_x)_{{U^y_k}^\dagger,O})\,,
\end{align} 
where $q_v,n_w \in \mathbb{C}$, and we define
\begin{equation}
\begin{split}
& \left. (\Gamma^{\cU_1}_x)_{O,U^y_{K-1}}+ (\Gamma^{\cU_1}_x)_{O,{U^y_{K-1}}^\dagger} \right. \\
= & \left. 2(2-K)(\Gamma^{\cU_1}_x)_{O,\mathbb{I}} \right. \\
& \left.
-\sum^{K-2}_{k=0}\left((\Gamma^{\cU_1}_x)_{O,U^y_{k}}+(\Gamma^{\cU_1}_x)_{O,{U^y_{k}}^\dagger}\right) \right.
\end{split}
\end{equation}
and
\begin{equation}
\begin{split}
& \left. (\Gamma^{\cU_1}_x)_{U^y_{K-1},O}+ (\Gamma^{\cU_1}_x)_{{U^y_{K-1}}^\dagger,O} \right . \\
= & \left. 2(2-K)(\Gamma^{\cU_1}_x)_{\mathbb{I},O}\right.
\\ &\left.
 -\sum^{K-2}_{k=0}\left((\Gamma^{\cU_1}_x)_{U^y_{k},O}+(\Gamma^{\cU_1}_x)_{{U^y_{k}}^\dagger,O}\right) \right.
\end{split}
\end{equation}
for all $O\in\mathcal{U}_{1}$ and $x\in[X]$.

We denote by $\cQ_1$ the set of behaviours $\mathbf{p}$ such that there exist positive semidefinite matrices $\Gamma_x^{\cU_1}$ satisfying Eqs.~\eqref{OEPG}, \eqref{eq:gamu}, \eqref{eq:diag1} and \eqref{OEMG}.
{This problem is again a semidefinite feasibility problem. It is further important to note that maximising a success metric, $S$, over the set $\cQ_1\supset\cQ$ also constitutes an SDP. We will denote this maximum by $S_{\cQ_1}$, and evidently, it is an upper bound on the maximum quantum value, $S_\cQ$.}
%Again this problem is a semidefinite feasibility problem and furthermore maximising a success metric, $S$ over the set $\cQ_1\supset\cQ$ also constitutes a \emph{semidefinite program} (SDP). We denote this maximum by $S_{\cQ_1}$. 

{Note that for the feasibility SDP the constraints deriving from normalisation of the probabilities are implicitly included in Eq.~\eqref{eq:gamu}, assuming the behaviour is properly normalised. On the other hand, in the optimisation SDP we can impose such constraints by excluding the operators corresponding to the final outcome of each measurement---as in the feasibility SDP---and imposing the non-negativity of the probabilities, i.e., 
\begin{equation}\frac{1}{2}+\frac14\left(\left(\Gamma^{\mathcal{U}_1}_x\right)_{\mathbb{I},U^y_k}+\left(\Gamma^{\mathcal{U}_1}_x\right)_{\mathbb{I},{U^y_k}^\dagger}\right)\geq 0,
\end{equation}
for all $k\in[K-2]$ and $x\in[X]$, along with their subnormalisation,
\begin{equation}\sum_{k=0}^{K-2}\left[\frac{1}{2}+\frac14\left(\left(\Gamma^{\mathcal{U}_1}_x\right)_{\mathbb{I},U^y_k}+\left(\Gamma^{\mathcal{U}_1}_x\right)_{\mathbb{I},{U^y_k}^\dagger}\right)\right]\leq 1,
\end{equation}
for all $x\in[X]$. The success metric should also be re-expressed using $p(K-1|x,y)=1-\sum^{K-2}_{k=0}p(k|x,y)$.}

%Explicitly, the SDP can written as
\begin{comment}
%
\begin{equation}
\begin{split}
    \max_{\{ \Gamma_x \}} ~~ & \left. \sum_{x,y,k} c_{x,y,k} \left(\frac{1}{2}+\frac14(\Gamma_x)_{\mathbb{I},U^y_k}+(\Gamma_x)_{\mathbb{I},U^y_k}\right) \right. \\
    \text{s.t.} ~~ & \left. \Gamma_x \ge 0 ~~ \forall x \right. 
     \\
    & \left. \sum_x \alpha_j^v(x) \Gamma_x = \sum_x \alpha_{j'}^v(x) \Gamma_x ~~ \forall v\in[V], \forall j,j' \in [V_v] \right.
      \\
    & \left. \sum_{k,y}\beta^w_j(k,y)((\Gamma_x)_{O,U^y_k}+(\Gamma_x)_{O,{U^y_k}^\dagger}) \right.
     \\
     & \left. = \sum_{k,y}\beta^w_{j'}(k,y)((\Gamma_x)_{O,U^y_k}+(\Gamma_x)_{O,{U^y_k}^\dagger}) ~~ \forall w\in[W], \forall j,j' \in [W_w]\right.
\end{split}
\end{equation}
%
\end{comment}

The upper bounds $S_{\cQ_1}\geq S_{\cQ}$ are in general non-trivial, as we will demonstrate in Sec.~\ref{sec:tight}. Moreover, in many of the cases we consider, these bounds coincide with the lower bounds obtained from see-saw methods, i.e., $S_{\cQ_1}=S_{\cQ_L}$, and are therefore tight (up to machine precision). To demonstrate the ease of implementation and high efficiency of this \textit{unitary-based} SDP relaxation, we provide a code tutorial in Appendix \ref{app:tutorial}.

The set $\cQ_1$ can be considered as the first \emph{level} of a hierarchy of sets characterised by an \emph{SDP hierarchy}. Level one of an SDP hierarchy employs moment matrices with some operator sequence $\mathcal{O}_{1}$, e.g., $\cO_1=(\mathbb{I},A,B)$. Then level $l\in\mathbb{N}$ employs moment matrices with operator sequence $\cO_l$ consisting of monomials of elements of $\cO_1$ with length at most $l$ , e.g., $\mathcal{O}_{2} =(\mathbb{I},A,B,AB,BA)$. Accordingly, we may define a hierarchy of outer approximations of the quantum set, $\cQ_1\supseteq \cQ_2\supseteq \ldots \supseteq \cQ$, characterised by the unitary-based SDP hierarchy in which level $l$ is characterised by the operator sequence $\cU_l$.

Next, in order to bound the set $Q^{\Pi}$, we revert to the standard projection-based SDP relaxations involving operator lists composed of projections. Specifically, we consider moment matrices $\left\{\Gamma^{\mathcal{P}_{1}}_x\right\}_x$ with operator sequence $\mathcal{P}_{1} = (\mathbb{I})\mathbin\Vert(\Pi^k_y)^{Y-1,K-2}_{y=0,k=0}$, where $\Pi^y_k$ are projections. Apart from \eqref{OEPG}, and the trivial constraints $(\Gamma^{\mathcal{P}_{1}}_x)_{\mathbb{I},\mathbb{I}}=1$ for all $x\in [X]$, we have additional constraints that follow from the projectivity of measurement operators, $(\Gamma^{\mathcal{P}_{1}}_x)_{\Pi^y_k,\Pi^y_{k'}}= \delta_{k,k'} (\Gamma^{\mathcal{P}_{1}}_x)_{\mathbb{I},\Pi^y_k}$ for all $x\in [X]$, $y\in[Y]$ and $k,k' \in [K]$. Further constraints emerge from operational equivalences of measurement effects (similar to \eqref{OEMG}). Hence, for all contextuality scenarios, one can define the hierarchy of sets $\mathcal{Q}_{l+1}^\Pi \subseteq \mathcal{Q}_l^\Pi$ for all $l \in \mathbb{N}$, containing $\mathcal{Q}^\Pi$. 

Finally, in order to bound the set $Q^{\Psi}$, we need not even invoke multiple moment matrices, instead we employ an \textit{unhinged} moment matrix $(\Gamma^{\mathcal{PS}_1})_{O,O'}=\tr({O}^\dagger{O'})$ with operator sequence $\mathcal{PS}_1=(\mathbb{I}) \mathbin\Vert(\ketbra{\psi_x})^{X-1}_{x=0}\mathbin\Vert(\Pi^k_y)^{Y-1,K-2}_{y=0,k=0}$.  It is straightforward to see that such a moment matrix is necessarily positive semidefinite. In addition to the constraints similar to the ones described above, we have constraints of the form $(\Gamma)_{\mathbb{I},\ketbra{\psi_x}}=(\Gamma)_{\ketbra{\psi_x},\ketbra{\psi_x}}=1$ that follow from the purity of the quantum preparations, along with constraints that ensue from operational equivalences of preparations (similar to \eqref{OEMG}). We refer to this formulation as the \textit{pure-state based} SDP relaxation. For all contextuality scenarios, we may define the hierarchy of sets $\mathcal{Q}_{l+1}^\Psi \subseteq \mathcal{Q}_l^\Psi$ for all $l \in \mathbb{N}$, containing $\mathcal{Q}^\Psi$.

\section{Tight bounds on the quantum set}\label{sec:tight}
The primary application of the relaxations introduced in the previous section is finding numerical proofs for the maximal quantum value $S_Q$ of any success metric $S$. Such a proof is possible when the lower bound $S_{\mathcal{Q}_L}$ obtained from internal maximisation techniques, such as {the} see-saw semidefinite programming method \cite{Noob1,mironowicz2018applications}, matches the upper bound obtained from {some finite level $l$} of the suitable semidefinite relaxation hierarchy, i.e., $S_{\cQ_L}=S_{\cQ_l}$ (up to machine precision).

We begin by considering the simplest non-trivial single parameter family of contextuality scenarios \cite{SimplePusey}, $(4,2,2,\mathcal{OE}_P(\alpha),\emptyset)$, wherein the preparation setting $x$ is composed of two bits $x= (x_0,x_1) \in \{0,1\}^2$, and we consider a family of operational equivalence conditions of the form $\frac{1}{2}P_{00}+\frac{1}{2}P_{11}\simeq \alpha P_{01}+(1-\alpha)P_{10}$, parameterised by the coefficient $\alpha\in[0,1]$, along with the success metric {$S^{\textit{rac}}(\mathbf{p})=\frac{1}{8}\sum_{x,y}p(x_y|x,y)$}. In light of Observation \ref{QEqualsQPi}, for preparation contextuality scenarios, we employ the projection-based relaxation $\mathcal{Q}^\Pi_l$, as for such scenarios one can in general take the measurement operators to be projections. While the first level $\mathcal{Q}^\Pi_1$ retrieves non-trivial upper bounds {$S^{\textit{rac}}_{\mathcal{Q}^\Pi_1}<S^{\textit{rac}}_{\mathcal{C}}$}, the bounds obtained from the second level of the relaxation $\mathcal{Q}^\Pi_2$ saturate the lower bounds obtained by internal maximisation, thus we find {$S^{\textit{rac}}_Q = S^{\textit{rac}}_{\mathcal{Q}_L}=S^{\textit{rac}}_{\mathcal{Q}^\Pi_2}$}. 
\begin{figure}
    \centering
    \includegraphics[width=\linewidth]{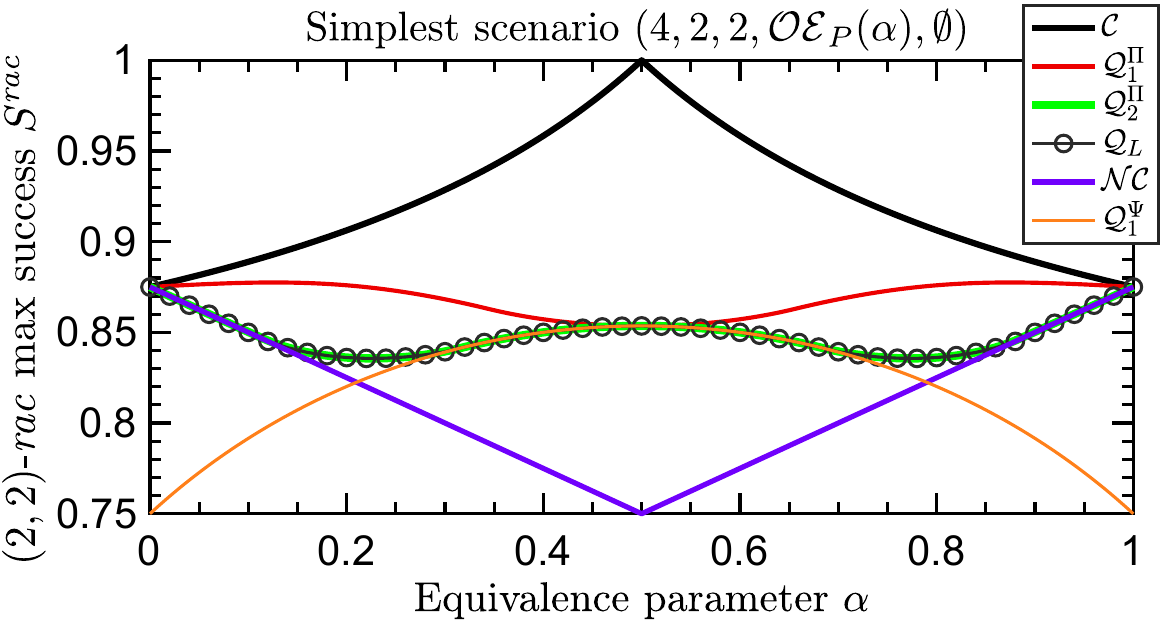}
    \caption{\label{batman} {Plot of bounds on the maximal value of $S^\textit{rac}(\mathbf{p})$ for behaviours $\mathbf{p}$ in various sets in the scenario $(4,2,2,\mathcal{OE}_P(\alpha),\emptyset)$}. Here the solid black curve represents the {tight upper bound, $S^\textit{rac}_\cC$, for the polytope $\cC$ of contextual behaviours.} The solid red {and green} curve{s} {represent the upper bounds, $\mathcal{Q}^\Pi_1$ and $\mathcal{Q}^\Pi_2$, on $S^\textit{rac}_Q$ given by} the first {and second} level{s} of the projection-based relaxation{, respectively.}  {T}he circled black curve represents the lower bounds {$S^\textit{rac}_{Q_L}$ on $S^\textit{rac}_Q$} obtained via the see-saw semidefinite programming method, the solid orange curve corresponds to the {upper bounds $S^\textit{rac}_{Q^\Psi_1}$ on the values $S^\textit{rac}_{Q^\Psi}$ obtained from the }first level of the pure-state based relaxation, and the solid purple curve corresponds to the {tight upper bounds $S^\textit{rac}_{\mathcal{NC}}$ for the} noncontextual polytope $\mathcal{NC}$. As the {lower} bounds {on the value $S^\textit{rac}_Q$} from internal maximization coincide ({up to} numerical precision) with the upper bounds from the second level of the projection-based relaxation, we obtain proofs of optimality, $S^{\textit{rac}}_{\cQ} = S^{\textit{rac}}_{\cQ_L} = S^{\textit{rac}}_{\mathcal{Q}^{\Psi}_{1}}$.
    Notice for a range of the scenario parameter $\alpha$, $S^{\textit{rac}}_\mathcal{Q}>S^{\textit{rac}}_{\mathcal{Q}^\Psi_1}\geq S^{\textit{rac}}_{\mathcal{Q}^\Psi}$, which constitutes a proof of Observation \ref{QNotEqualsQPsi}, demonstrating that certain extremal quantum contextual behaviours {can only be realised through the explicit use of mixed quantum states in the experiment.}}
\end{figure}
% these 
Moreover, for this case we employed the pure-state based relaxation $\mathcal{Q}^{\Psi}_1$ to bound the set $\mathcal{Q}^{\Psi}$. Remarkably, for a range of the scenario parameter $\alpha$, we find that pure states are not sufficient to violate the noncontextual bound, i.e., {$S^{\textit{rac}}_{\mathcal{Q}^{\Psi}}\leq S^{\textit{rac}}_{\mathcal{Q}^{\Psi}_1}< S^{\textit{rac}}_\mathcal{NC} < S^{\textit{rac}}_{\mathcal{Q}}$}, which constitutes a numerical proof of Observation \ref{QNotEqualsQPsi} (see Fig.~\ref{batman}). 

{It is helpful to recognise that the metric $S^{\textit{rac}}$ gives the average success probability when the contextuality scenario is cast as the following communication task. A party (Alice) encodes two random bits, $x=(x_0,x_1)$, in a preparation $P_x$ and sends it to a second party (Bob). Bob must guess the bit $x_y$, where $y\in\{0,1\}$ is chosen at random. He does so by performing a measurement $M_y$ on the preparation, the outcome of which constitutes his guess. The probability of winning given $x$ and $y$ is therefore $p(x_y|P_x,M_y)$. This task is an example of \emph{random access coding}. In general, we refer to an analogous task in which Alice must encode $n$ dits into her preparations, subject to some constraints, as an $(n,d)$-\textit{rac}. The relevant contextuality scenario for an $(n,d)$-\textit{rac} is $(d^n,n,d,\mathcal{OE}_P,\mathcal{OE}_M)$, for some operational equivalences $\mathcal{OE}_P$ and $\mathcal{OE}_M$.} 

Next, we consider a cryptographically appealing family of preparation noncontextuality inequalities based on {$(n,2)$-\textit{racs} that impose \emph{parity obviliousness}, known as} {$(n,2)$-\textit{poracs}, and} introduced in Ref.~\cite{PhysRevLett.102.010401}. {Parity obliviousness is the cryptographic requirement that Bob must gain no information about the \emph{parity} of any subset (of order greater than 1) of Alice's bits. Explicitly, if Alice's bits are $ (x_0,\ldots,x_{n-1})$, she must not send any information about $\oplus_{i\in J}x_{i}$ for any $J\subseteq[N]$ such that $|J|>1$, where $\oplus$ denotes addition modulo two.} {Parity obliviousness} necessitates nontrivial operational equivalences{, $\mathcal{PO}(n,2)$, on the preparations}, thereby tying these communication tasks to contextuality scenarios $(2^n,n,2,\mathcal{PO}(n,2),\emptyset)$. For instance, consider the $(3,2)$-\textit{porac} task {and denote Alice's three input bits by} $x = (x_0,x_1,x_2)$ {and the corresponding preparations by $P_x$. Here parity obliviousness} translates to the following operational equivalences, $\mathcal{PO}(3,2)$:
\begin{equation} \label{32PORACOEPPair}
    \frac12( P_{00x_2} + P_{11x_2}) \simeq \frac12( P_{01x_2} + P_{10x_2} ),
\end{equation}
{for all $x_2\in\{0,1\}$}, and similarly for the other pairs of bits, {and}
\begin{equation} \label{32PORACOEPThree}
\begin{split}
    & \left. \frac14 ( P_{000} + P_{011} + P_{101} + P_{110} ) \right. \\
    \simeq & \left. \frac14 (P_{001} + P_{010} + P_{100} + P_{111} ). \right.
\end{split}
\end{equation}
{The average success probability of a strategy in the $(n,2)$-\textit{porac} is given by the success metric $S^{\textit{rac}}(\mathbf{p})=\frac{1}{2^n n}\sum_{x,y}p(x_{y}|x,y)$ on its behaviour $\mathbf{p}$ in the contextuality scenario $(2^n,n,2,\mathcal{PO}(n,2),\emptyset)$}. The noncontextual {upper} bound on the success metric is $S^{\textit{rac}}_{\mathcal{NC}}=\frac{1}{2}(1+\frac{1}{n})$. The first levels of both relaxations, the {projection}-based $\mathcal{Q}^{\Pi}_1$ and the unitary-based $\mathcal{Q}_1$, {match} the known optimal quantum bounds for $n\in\{2,\ldots,7\}$, i.e., $S^{\textit{rac}}_\mathcal{Q}=S^{\textit{rac}}_{\mathcal{Q}^{\Pi}_1}=S^{\textit{rac}}_{\mathcal{Q}_1}=\frac{1}{2}(1+\frac{1}{\sqrt{n}})$ {up to numerical precision} \cite{chailloux2016optimal,ghorai2018optimal}. 

{Thus far} we have considered contextuality scenarios {with only trivial} equivalences of measurement effects. Moving to tests of universal contextuality, we consider the scenario $(6,3,2,\mathcal{U}_P,\mathcal{U}_M)$ where we have the operational equivalences $\frac{1}{2}(P_{0}+P_{1}) \simeq \frac{1}{2}(P_{2}+P_{3})\simeq \frac{1}{2}(P_{4}+P_{5})$, and $\frac{1}{3}([0|M_0]+[0|M_1]+[0|M_2]) \simeq \frac{1}{3}([1|M_0]+[1|M_1]+[1|M_2])$. Here the noncontextual polytope has six distinct non-trivial facets up to relabelling symmetries \cite{PhysRevA.97.062103}, each corresponding to a distinct noncontextuality inequality. We give the maximal quantum violation of all these facet inequalities in Table \ref{SchidtSpekkens}, obtained from the first level of the unitary-based relaxation. Moreover, we retrieve optimal states and measurements that saturate the bounds, displayed in Fig.~\ref{StatesAndMeas}. 
\begin{table}[h]
\resizebox{\columnwidth}{!}{%
\begin{tabular}{@{}cccccc@{}}
\toprule
$S(\mathbf{p})$                                              & $S_{\mathcal{C}}$ & $S_{\mathcal{NC}}$ & $S_{\mathcal{Q}_L}$ & $S_{\mathcal{Q}_1}$ & $S_{\mathcal{Q}_2}$ \\ \midrule
$p_{00} +p_{12} +p_{24} $                        & $3$               & $2.5$              & $3$                 & $3$   & $3$              \\
$p_{00} +p_{11} +p_{24}$                         & $3$               & $2.5$              & $2.8660254$         & $2.8660254$  & $2.8660254$     \\
$p_{00} -p_{02} - 2 p_{04}-2 p_{11} +2 p_{12} +2 p_{24}$ & $4.5$ & $3$ & $3.9209518$ & $3.9209518$ & $3.9209518$ \\
$2p_{00} - p_{11} + 2 p_{12}$                    & $3.5$             & $3$                & $3.3660254$         & $3.3660254$     & $3.3660254$    \\
$p_{00}  - p_{04} +   p_{11} + p_{12} +2 p_{24}$ & $5$               & $4$                & $4.6889010$         & $4.6889010$    & $4.6889010$     \\
$p_{00}  - p_{04} +   2p_{11}  +2 p_{24}$        & $5$               & $4$                & $4.6457513$         & $4.6457513$   & $4.6457513$      \\
$p_{00} -p_{03} - 2 p_{04}-2 p_{11} +2 p_{12} +2 p_{24}$ & $4.5$ & $3.5$ & $3.5$ & $3.5552760$ & $3.5$ \\ \bottomrule
\end{tabular}
}
\caption{\label{SchidtSpekkens} Maximal values for the six facet inequalities \cite{PhysRevA.97.062103} in the universal contextuality scenario $(6,3,2,\mathcal{U}_P,\mathcal{U}_M)$ \cite{schmid2018contextual}, where $p_{yx}=p(0|x,y)$, for contextual behaviours {($S_\mathcal{C}$)}, noncontextual behaviours {($S_\mathcal{NC}$)}, along with lower bounds {on the maximal quantum values $S_{\mathcal{Q}}$} obtained from the see-saw semidefinite programming method {($S_{\cQ_L}$)}, and upper-bounds {on $S_\mathcal{Q}$}  obtained from the first {and second} level{s} of the unitary-based relaxation ($S_{\mathcal{Q}_1}$ {and $S_{\mathcal{Q}_2}$)}. Notice that for the six facet inequalities the lower bounds, $S_{\cQ_L}$, {coincide with} the upper bounds, $S_{\cQ_1}$, up to seven decimal places, which provides a numerical proof for the optimal quantum value, $S_{\cQ}$. Furthermore, while for the first facet inequality the quantum bound saturates the contextual bound, $S_{\mathcal{Q}}=S_{\mathcal{C}}$, for all other inequalities quantum behaviours fail to reproduce extremal contextual behaviours, i.e., $S_{\mathcal{Q}}<S_{\mathcal{C}}$. The seventh inequality is not a facet inequality, but a contextuality inequality stemming from a typing error in Ref.~\cite{PhysRevA.97.062103}, wherein it is stated to be a facet inequality \footnote{The error was confirmed by private communication and the noncontextual bound of $3.5$ for this success metric was calculated only in the present work.}. We include this inequality in our results table, since it presents an example of the quantum upper bound upon restricting ourselves to projective measurements, {$S_{\mathcal{Q}^\Pi_1}=S_{\mathcal{Q}^\Pi_L}=3.4641016$}, being lower than even the noncontextual bound, $S_\mathcal{NC}=3.5$, thus forming a numerical proof of Observation \ref{QNotEqualsQPi}. The seventh inequality also presents a case in which the second level of our hierarchy was required to find a tight bound.}
\end{table}

Next, we considered universal noncontextuality inequalities obtained from $n$-\textit{cycle} Kochen--Specker {(KS)} graphs \cite{PhysRevA.94.062103,LIANG20111}. In this case also, {level one of} the unitary-based relaxation retrieves tight quantum bounds (up to machine precision), $S^{ks}_{\mathcal{Q}}=3+\frac{n\cos{\frac{\pi}{n}}}{1+\cos{\frac{\pi}{n}}}$ for $n\in\{5,7,9,11\}${, where $S^{ks}$ is the success metric given in Ref.~\cite[Section~III.C]{PhysRevA.94.062103}} (see Table \ref{KCnCycle}).
\begin{table}[h]
\begin{center}
\begin{tabular}{@{}ccccc@{}}
\toprule
$n$  & $S^{ks}_{\mathcal{C}}$ & $S^{ks}_{\mathcal{NC}}$ & $S^{ks}_{\mathcal{Q}_L}$ & $S^{ks}_{\mathcal{Q}_1}$ \\ \midrule
$5$  & $6$               & $5$                & $5.2360679$         & $5.2360679$         \\
$7$  & $7$               & $6$                & $6.3176672$         & $6.3176672$         \\
$9$  & $8$             & $7$                & $7.3600895$         & $7.3600895$         \\
$11$ & $9$             & $8$                & $8.3863029$         & $8.3863029$         \\ \bottomrule
\end{tabular}
\end{center}
\vspace{-0.3cm}
\caption{\label{KCnCycle} Maximal values  for generalised noncontextuality inequalities based on $n$-cycle KS graphs for contextual behaviours ($S^{ks}_\mathcal{C}$), noncontextual behaviours ($S^{ks}_\mathcal{NC}$), along with lower bounds obtained from the see-saw semidefinite programming method ($S^{ks}_{\mathcal{Q}_L}$) and upper bounds obtained from the first level of the unitary-based relaxation ($S^{ks}_{\mathcal{Q}_1}$). Again, in each case the lower bounds ($S^{ks}_{\cQ_L}$) coincide with the upper bounds ($S^{ks}_{\cQ_1}$) up to seven decimal places, which provides a numerical proof for the optimal quantum value, $S^{ks}_{\cQ}$.}
\end{table}

 Finally, we considered a class of novel universal noncontextuality inequalities, termed $(n,2)$-\textit{mporacs}, based on the $(n,2)$-\textit{porac}. In addition to the preparation equivalence conditions that ensue from parity obliviousness, we introduce measurement equivalences {given by} $\frac{1}{n}\sum_{y\in [n]} [0|M_y] \simeq \frac{1}{n}\sum_{y\in [n]} [1|M_y]$.
While the noncontextual bounds on the success metric {$S^\textit{rac}$} remain unaltered {from the $(n,2)$-\textit{porac} case} (see Lemma \ref{mporac}), the quantum bounds are further restricted. Again, the first level of the unitary-based relaxation yields {tight upper} bounds {on the maximal quantum value, $S^\textit{rac}_Q$,} for $n\in\{2,\ldots,7\}$ (see Table \ref{mporacTable}).  
\begin{table}[h]
\begin{center}
\begin{tabular}{@{}ccccc@{}}
\toprule
$n$ & $S^\textit{rac}_{\mathcal{C}}$ & $S^\textit{rac}_{\mathcal{NC}}$ & $S^\textit{rac}_{\mathcal{Q}_L}$ & $S^\textit{rac}_{\mathcal{Q}_1}$ \\ \midrule
$2$ & $3/4$             & $3/4$              & $0.75$              & $0.75$              \\
$3$ & $3/4$             & $2/3$              & $0.75$              & $0.75$              \\
$4$ & $3/4$             & $5/8$              & $0.7165063$         & $0.7165063$         \\
$5$ & $3/4$             & $3/5$              & $0.7041241$         & $0.7041241$         \\
$6$ & $3/4$             & $7/12$             & $0.6863389$         & $0.6863389$         \\
$7$ & $3/4$             & $4/7$              & $0.6767766$         & $0.6767766$         \\ \bottomrule
\end{tabular}
\end{center}
\vspace{-0.3cm}
\caption{\label{mporacTable} Maximal values for the success probability in the $(n,2)$-\textit{mporac} tasks for contextual behaviours ($S^\textit{rac}_\mathcal{C}$), noncontextual behaviours ($S^\textit{rac}_\mathcal{NC}$), along with lower bounds obtained from the see-saw semidefinite programming method ($S^\textit{rac}_{\mathcal{Q}_L}$), and upper bounds obtained from the first level of the unitary-based relaxation ($S^\textit{rac}_{\mathcal{Q}_1}$). Again, in each case the lower bounds ($S_{\cQ_L}$) coincide with the upper bounds ($S^\textit{rac}_{\cQ_1}$) up to seven decimal places, which provides a numerical proof for the optimal quantum value, $S^\textit{rac}_{\cQ}$, and demonstrates the efficacy of the unitary-based relaxation. }
\end{table}

\section{Applications} 
\subsection{Monogamy of contextuality}\label{sec:monogamy}

A new feature that we uncover using the relaxations described in the previous sections is that of \textit{monogamy of contextuality}. Consider a prepare-and-measure scenario with two measuring parties, Bob and Charlie, whose actions are space-like separated. The preparation party, Alice, shares a preparation according to her setting $x$ between Bob and Charlie. The observed statistics in such a case are described by $p_{BC}(k_B, k_C | x, y_B, y_C)$, where $y_B$ ($y_C$) and $k_B$ ($k_C$) are the measurement settings and outcomes of Bob (Charlie), respectively. The space-like separation implies \textit{no-signalling} between Bob and Charlie~\cite{nonlocality}, that is,
\begin{equation}
    \begin{aligned}
          p_B(k_B | x, y_B, y_C ) = p_B&(k_B | x, y_B, y'_C ) 
        \\ &\forall k_B, x, y_B, y_C, y'_C \\
         p_C(k_C | x, y_B, y_C ) = p_C&(k_C | x, y'_B, y_C )\\
         &\forall k_C, x, y_B, y'_B, y_C, 
    \end{aligned}
\end{equation}
where $p_B$ and $p_C$ are the marginal distributions of Bob and Charlie, respectively.

We consider the tripartite version of the $(3,2)$-\textit{porac} task, wherein Alice's preparations adhere to the operational equivalences of preparations ensuing from parity obliviousness \eqref{32PORACOEPPair}, \eqref{32PORACOEPThree}, and the two measuring parties, Bob and Charlie, are both aiming to maximise their average success probabilities, $S^\textit{rac}_B$ and $S^\textit{rac}_C$, respectively, expressed by

\begin{equation} 
    S^\textit{rac}_Z = \frac{1}{24} \sum_{x, y_Z} p_Z(x_{y_Z} | x, y_Z ),
\end{equation}
where $Z$ denotes either $B$ or $C$, and $p_Z$ is the respective marginal probability distribution. 
We find that, say, Charlie's maximum success probability is limited by that of Bob, i.e., their maximum success probabilities satisfy certain \textit{monogamy relations}. For any behaviour in $\cC$ also satisfying no-signalling (NS), we obtain the relation $S^\textit{rac}_B + S^\textit{rac}_C \le \frac32$ for the $(3,2)$-\textit{porac} (see Fig.~\ref{fig:monogamy}). {Note that this relation is more restrictive than simply adding up the maximal success probabilities attainable in $\cC$, $S^\textit{rac}_\cC = 1$, hence the term \textit{monogamy}.}

Quantum behaviours in this setup, $\pp \in \cQ$, are described by $p_{BC}(k_B, k_C | x, y_B, y_C) = \tr[ \rho_x (M^{y_B}_{k_B} N^{y_C}_{k_C})]$, where $\{\rho_x\}$ are potentially entangled states on $\cH_B\otimes \cH_C$, $\{M^{y_B}_{k_B}=M'^{y_B}_{k_B}\otimes\mathbb{I}_C\}$ and $\{N^{y_C}_{k_C}=\mathbb{I}_B\otimes N'^{y_C}_{k_C}\}$ are the measurement operators of Bob and Charlie, respectively, and one might assume without loss of generality that they are projections, in light of Observation \ref{QEqualsQPi}. In order to obtain monogamy relations in the quantum case, we employ our projection-based semidefinite relaxation for the set $\cQ^\Pi$, and implement the relaxation $\cQ^\Pi_{1 + BC}$\footnote{The set $\cQ^\Pi_{1 + BC}$ emulates the intriguing physical scenario wherein Alice distributes \emph{almost quantum correlations} \cite{almost_quantum} to Bob and Charlie.}, built from monomials of the projections of Bob, Charlie, and products of Bob's and Charlie's projections, $M^{y_B}_{k_B} N^{y_C}_{k_C}$ (but not employing monomials corresponding to products of different operators of Bob or Charlie only, e.g.~$M^{y_B}_{k_B} M^{y'_B}_{k'_B}$). Besides the constraints that ensue from parity obliviousness, the relaxation $\cQ^\Pi_{1 + BC}$ includes commutation relations (CR)~\cite{navascues2008convergent} that ensue from the spatial separation between Bob and Charlie. 
This relaxation gives us the monogamy relation $S^\textit{rac}_B + S^\textit{rac}_C \le 1.392$. For the complete monogamy relation, see Fig.~\ref{fig:monogamy}, where we plot the maximum attainable $S^\textit{rac}_C$ given a fixed $S^\textit{rac}_B$.

\begin{figure}[h]
    \centering
    \includegraphics[width=\linewidth]{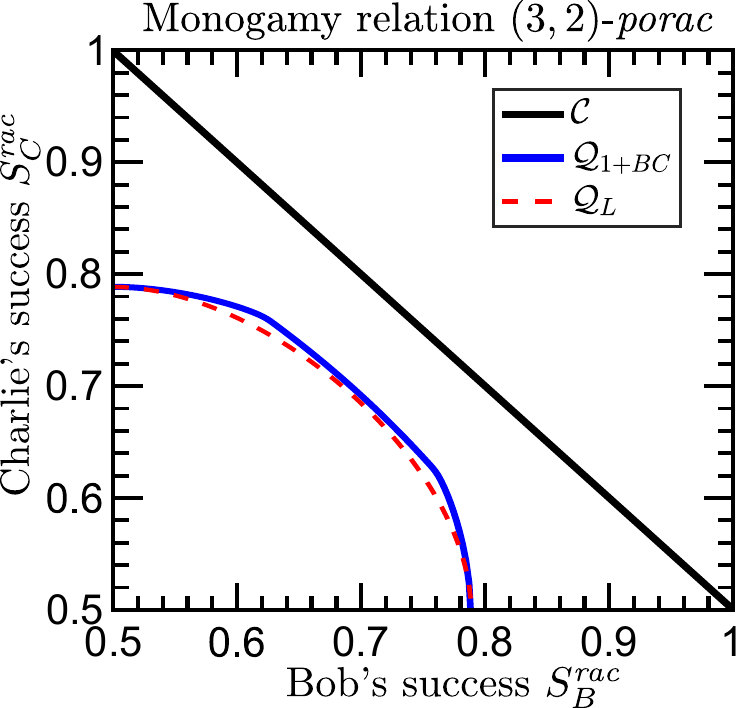}
    \caption{Monogamy relations expressed as bounds on the success probability $S^\textit{rac}_C$ of Charlie, given the success probability $S^\textit{rac}_B$ of Bob. For contextual behaviours $\pp\in\cC$, the upper and lower bounds agree (as computing the value is a simple linear program), and give the solid black curve on the figure. For the quantum set $\cQ$, we obtain upper bounds (solid blue curve) from the ${1+BC}$ level of the projection-based SDP hierarchy with additional commutation relations (CR), and lower bounds (dashed red curve) from the see-saw optimisation method ($\cQ_L$). \label{monogamy}}
    \label{fig:monogamy}
\end{figure}

\subsection{Semi-device-independent quantum key distribution}
{We introduce} the following semi-device-independent quantum key distribution protocol based on the $(3,2)$-\textit{porac}: Alice and Bob play many rounds of the (3,2)-{\textit{porac}} game, in each round uniformly randomly choosing their settings, $x=(x_0,x_1,x_2)$ and $y$, and each of Bob's outcomes forms a bit of his raw key, $b=k$. After accumulating statistics from many rounds, Bob announces his input setting, $y$, for each round via a classical public channel (that can be read but not modified by any third party). After learning these values, Alice only keeps her bit $a = x_y$ to create her raw key, and discards the other two. Bob then announces a part of his output bits, to allow for the parties to estimate the success metric of the {\textit{porac}} task. {Based on this estimate, they either abort the protocol---if the success metric does not reach a certain threshold---or proceed to extract perfectly correlated key bits from their remaining bits $b$ and $a$, by means of one-way classical communication.}

The monogamy relations {derived in the previous section} imply the potential application of preparation contextuality to quantum key distribution. In particular, we might replace the trusted party Charlie with a malicious eavesdropper Eve, who is trying to read the shared key. In a particularly paranoid scenario, we might assume that Eve is responsible for the manufacturing of the devices of Alice and Bob, and therefore she has perfect control and knowledge of the states $\rho_x$ that Alice is able to prepare, as well as the measurements $M^{y}_{b}$ that Bob is able to perform. However, we assume that Alice's preparations, whatever they may be, respect the operational equivalences necessary to guarantee parity obliviousness. Further, for simplicity, we assume that the devices behave in the same way every round (\textit{i.i.d.~assumption} \cite{iid}), and that while running the {\textit{porac}} protocol, Alice and Bob have free will and can choose their settings $x$ and $y$ independently of Eve (\textit{free will assumption} \cite{DIQKD}). We further assume that Eve measures her part of the quantum state in each round---assuming \textit{no quantum memory} of Eve \cite{iid}---and produces a classical output $e$. The resulting general tripartite correlation is of the form $p_{ABE|XY}(a,b,e|x,y) = \delta_{a,x_y} \tr[ \rho_x (M^y_b \otimes E_e) ]$, where Alice's preparations $\{\rho_x\}$ are potentially entangled states on $\cH_B\otimes \cH_E$. Averaging over all the setting choices, the parties end up with the tripartite probability distribution,
\begin{equation}
    p_{ABE}(a,b,e) = \sum_{xy} p_{XY}(x,y) p_{ABE|XY}(a,b,e|x,y),
\end{equation}
and we will assume the uniform distribution $p_{XY}(x,y) = \frac{1}{24}$ for all $x$, $y$.

According to the seminal result by Csisz\'ar and K\"orner~\cite{csiszarkorner}, the extractable key, $r$, between Alice and Bob from the above distribution is lower bounded by
\begin{equation}
    r \ge I(A:B) - I(A:E) = H(A|E) - H(A|B),
\end{equation}
where $I(A:B)$ is the mutual information of $A$ and $B$, $H(A|B)$ is the conditional Shannon entropy, and these quantities are computed from the bipartite marginal distributions.

We bound the first term via the min-entropy $H(A|E) \ge H_\text{min}(A|E)$, where
\begin{widetext}
\begin{equation}\label{eq:Hminbound}
\begin{split}
    H_{\text{min}}(A|E) = & \left. -\sum_e p_E(e) \log\big[ \max_a \{ p_{A|E}(a|e) \} \big] = -\sum_e p_E(e) \log\big[ p_{A|E}(e|e) \big] \right. \\
    \ge & \left. -\log\Big[ \sum_e p_E(e) p_{A|E}(e|e) \Big] = -\log\Big[ \sum_e p_{AE}(e,e) \Big] \right. \\
    = & \left. -\log \Big[ \sum_{e,x,y} p_{XY}(x,y) p_{AE|XY}(e,e|x,y) \Big] = -\log \Big[ \frac{1}{24} \sum_{e,x,y} p_{E|XY}(e|x,y) \cdot \delta_{e,x_y} \Big] \right. \\
    = & \left. -\log\Big[ \frac{1}{24} p_{E|XY}(x_y|x,y) \Big] = -\log S^\textit{rac}_E, \right.
\end{split}
\end{equation}
\end{widetext}
where {$H_{\text{min}}$ is the min-entropy and} $S^\textit{rac}_E$ is {Eve's success probability in the same \textit{porac} task that Alice and Bob perform.} In Eq.~\eqref{eq:Hminbound}, the second equality is assuming an optimal guessing strategy for Eve, and the inequality follows from the concavity of the logarithm.

The term $H(A|B)$ can always be computed from the observed behaviour, which Alice and Bob are able to estimate in the protocol. {However, the monogamy relations derived in the previous section suggest that it might be advantageous to express $H(A|B)$ in terms of Bob's success probability, which potentially allows us to bound Eve's success. In order to be able to express $H(A|B)$ in terms of Bob's success probability,}
we assume that independently of the settings $x$ and $y$, Bob's output is $b = x_y$ with probability $S^\textit{rac}_B$ and $b \neq x_y$ with probability $1 - S^\textit{rac}_B$ (this holds in the case of the optimal quantum success probability, or when one assumes a noisy scenario with uniform noise, e.g.~employing the noisy states $\tilde{\rho}_x = \eta \rho_x + (1-\eta) \frac{\mathbb{I}}{d}$, where $\rho_x$ are the optimal states and $d$ is the dimension). Under these assumptions, $H(A|B) = h(S^\textit{rac}_B)$, where $h$ is the binary entropy, and we obtain a bound on the key rate in terms of the success probabilities of Eve and Bob,
\begin{equation}
    r \ge -\log S^\textit{rac}_E - h(S^\textit{rac}_B).
\end{equation}

{Since the logarithm is monotonically increasing,} in order to obtain a lower bound on the key rate in terms of Bob's success probability, $S^\textit{rac}_B$, it is enough to upper bound $S^\textit{rac}_E$ in terms of $S^\textit{rac}_B${. This} can be done using our semidefinite relaxation techniques, analogously to the monogamy relations obtained in Section~\ref{sec:monogamy}. We employed the projection-based relaxation $Q^{\Pi}_{1 + BE}$, and found that for a nontrivial region, $S^\textit{rac}_B\gtrapprox0.776$, below the optimal success probability $S^\textit{rac}_Q\approx0.7886751$, it is possible to extract a secure key from the observed behaviour (see Fig.~\ref{keyrate}). Therefore, we have demonstrated that preparation contextuality leads to secure semi-device-independent quantum cryptographic protocols\footnote{We note that in Ref.~\cite{KCBS_QKD} the authors study quantum key distribution based on the monogamy of the  Klyachko--Can--Binicioglu--Shumovsky-type contextuality, and in Ref.~\cite{QKD_mmt} the authors propose a quantum key distribution protocol based on generalised measurement contextuality. However, up to our knowledge, our proposed protocol is the first demonstration of the potential of preparation contextuality for semi-device-independent quantum key distribution.}.

\begin{figure}
    \centering
    \includegraphics[width=\linewidth]{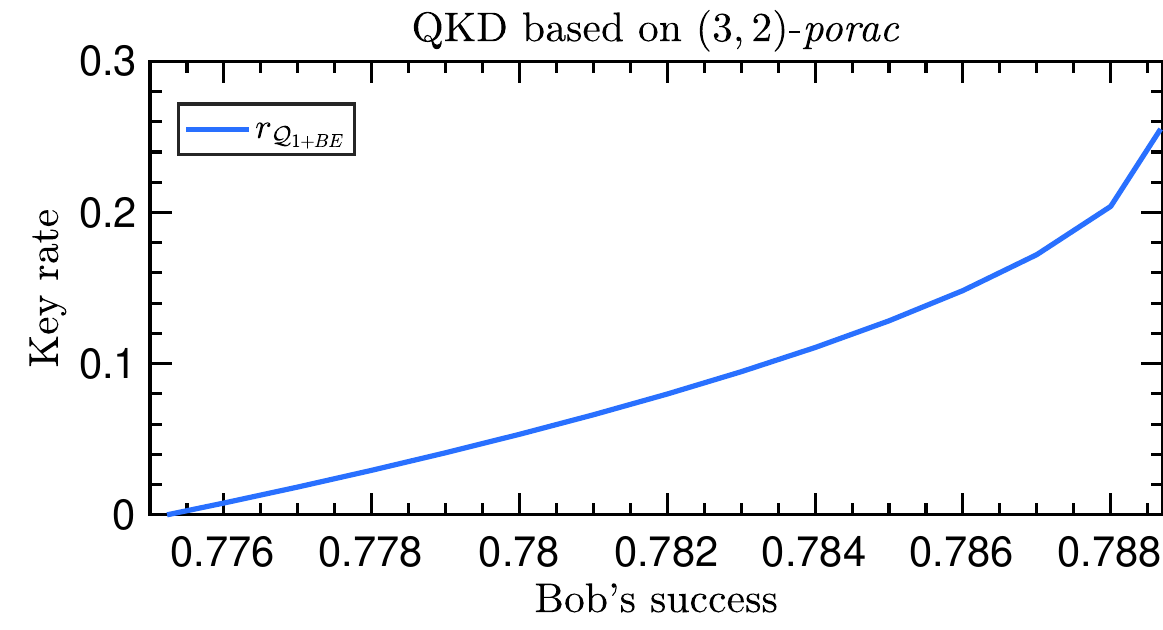}
    \caption{Lower bound on the key rate as a function of Bob's success probability employing the $(3,2)$-\textit{porac} based key distribution scheme described in the main text, computed via the relaxation $\cQ^{\Pi}_{1+BE}$. We see that for success probabilities $S^\textit{rac}_B\gtrapprox 0.776$ close to the optimal one, Alice and Bob are able to extract a secure key. If the parties find the success metric to be lower than this threshold value, the parties abort the protocol in light of the potential presence of an eavesdropper. \label{keyrate}}
\end{figure}

\section{Summary}
 In this work, we sought to characterise the behaviours appearing in generalised contextuality scenarios with operational equivalences of preparations and measurements. To this end, we {laid} down a framework {for} rigorously defining contextuality scenarios {and the associated} polytope of valid contextual behaviours, $\mathcal{C}$, in prepare-and-measure experiments. We studied the set of quantum contextual behaviours, $\mathcal{Q}$. Moreover, we defined subsets of quantum contextual behaviours, namely, $\mathcal{Q}^{\Pi}$ and $\mathcal{Q}^{\Psi}$, wherein the {experimental} preparation and measurement procedures are restricted to pure states, and projective measurements, respectively. We established a number of general properties of these sets of contextual behaviours, namely, the convexity of $\mathcal{Q}$ (Observation \ref{QisConvex}), the convexity of $\mathcal{Q}^\Pi$ (Observation \ref{QPiisConvex}), and the absolute absence of contextuality in scenarios with only trivial operational equivalences of preparations (Observation \ref{NCequalsCBezPrepEq}) and in prepare-and-measure experiments with three or fewer preparations (Observation \ref{NCequalsCLessThanThreePreps}). Moreover, in contextuality scenarios with only trivial operational equivalences of measurement effects, it is enough to consider projective measurements (Observation \ref{QEqualsQPi}). We also find that in some scenarios the preparations and measurements actually appearing in an experiment (as opposed to their hypothetical counterparts appearing only in operational equivalences) must be mixed states and unsharp measurements, respectively, in order to violate noncontextuality, or to produce maximally contextual quantum behaviours (Observations \ref{QNotEqualsQPsi} and \ref{QNotEqualsQPi}). 
    
 With the framework in place, we focused on developing semidefinite programming techniques to bound the amount of quantum contextuality, as measured by the maximum quantum violation of noncontextuality inequalities. However, the fact that one needs to consider mixed states and unsharp measurements presents new challenges. In particular, one cannot employ simple modifications of the known semidefinite programming techniques such as that of Ref.~\cite{navascues2008convergent}. To address these issues, we formulate novel semidefinite programming relaxations. First, to address the inadequacy of pure states in contextuality scenarios, our relaxation employs multiple moment matrices hinged on the corresponding quantum preparations. The operational equivalences of preparations then semantically extend to linear constraints on the moment matrices themselves. These constraints are notably independent of the particulars of the operators indexing the moment matrices. Second, to address the inadequacy of projective measurements, we devise a novel formulation of semidefinite programming relaxations, entailing moment matrices indexed exclusively by monomials of unitary operators. This unitary operator parameterisation may be of independent interest to readers interested in noncommuting polynomial optimisation methods. Our formulation allows one to retrieve upper bounds on the quantum violation of noncontextuality inequalities with arbitrary operational equivalences. 
 
 The resulting hierarchies of semidefinite programs are efficient, as they provide tight bounds already at low levels. Consequently, we retrieve upper bounds on maximum quantum values, along with numerical proofs of their tightness, for more than twenty-one distinct noncontextuality inequalities, spanning a diverse selection of contextuality scenarios. To demonstrate the ease of implementation of these techniques, we provide a code tutorial (Appendix \ref{app:tutorial}) for the avid reader. 
 
 Moreover, these techniques also allow us to bound functionals that are relevant for applications, specifically, to derive monogamy relations for preparation contextuality, and to lower bound secure key rates of a novel semi-device-independent quantum key distribution scheme fuelled by quantum advantage in {the} $(3,2)$-\textit{porac} task and the monogamy of preparation contextuality.
 
{ While in this work we have considered contextuality scenarios with operational equivalences of preparations and measurements, in general, contextuality scenarios may also include operational equivalences of transformations \cite{spekkens2005contextuality}. {E}xtending the framework and the SDP technique so as to include operational equivalences of transformations forms an essential future research avenue. {Additionally, proving (or disproving) the convergence of our SDP hierarchy to the set of quantum contextual behaviours (in analogy with the convergence of the Navascu\'es-Pironio-Ac\'in hierarchy, approximating the quantum set of behaviours in Bell scenarios~\cite{navascues2008convergent}) presents another immediate research direction.}}

\section*{Note added}
{While finalising this article, we became aware of the related work in Ref.~\cite{tavakoli2020bounding}, which also features a hierarchy of SDP relaxations for bounding the set of quantum contextual behaviours. The hierarchy in Ref.~\cite{tavakoli2020bounding} is considerably different from the one presented here. {The authors} employ a single unhinged moment matrix, and to accommodate the inadequacy of pure states and projective measurements, they utilise additional localising moment matrices, each hinged on a positive semidefinite operator. For contextuality scenarios with {only trivial} measurement equivalences, their hierarchy should, to the best of our knowledge, perform at least as well as the one presented here. However, in the test cases we considered, {the present hierarchy yielded tight bounds with a smaller number of total variables.} 
 {Furthermore, }in general contextuality scenarios with non-trivial operational equivalences {our approach employing unitary operators } provides faster convergence. On the other hand, the hierarchy in Ref.~\cite{tavakoli2020bounding} admits an interesting generalisation to quantifying the simulation cost of contextuality.}

\section{Acknowledgements} 
The authors thank Piotr Mironowicz for the first co-implementation of the relaxation, and Miguel Navascu\'es for the idea in Lemma \ref{unitarylemma}. We thank Debashis Saha, Nikolai Miklin, Ana Bel\'en Sainz and John H.~Selby for fruitful discussions. AC acknowledges financial support by FNP grants First TEAM/2016-1/5,and NCN grant SHENG 2018/30/Q/ST2/00625. VJW acknowledges support from the Foundation for Polish Science (IRAP project, ICTQT, contract no.~2018/MAB/5, co-financed by EU within Smart Growth Operational Programme). MF and VJW acknowledge funding from the Government of Spain (FIS2020-TRANQI and Severo Ochoa CEX2019-000910-S), Fundaci\'o Cellex, Fundaci\'o Mir-Puig, Generalitat de Catalunya (CERCA, AGAUR SGR 1381 and QuantumCAT). MF was supported by the ERC AdG CERQUTE. The numerical optimization was carried out using \href{https://ncpol2sdpa.readthedocs.io/en/stable/index.html}{Ncpol2sdpa} \cite{wittek2015algorithm}, \href{https://yalmip.github.io/}{YALMIP} \cite{Lofberg2004} and \href{https://www.mosek.com/documentation/}{SDPT3}~\cite{toh1999sdpt3}. AC acknowledges \href{https://youtu.be/MM62wjLrgmA}{Schism by Tool} for creative support. 

\onecolumn

\bibliographystyle{alphaurl}
\bibliography{cite}

\clearpage
\onecolumngrid
\appendix

\section{Proofs}\label{app:proofs}

In this section we recall the statements from the main text and provide their proofs.

\setcounter{prop}{0}

\setcounter{lem}{0}
%****************************************************************
\begin{prop} \label{QisConvex_app}
For all contextuality scenarios the set of quantum behaviours $\mathcal{Q}$ is convex.
\end{prop}
%****************************************************************
\begin{proof}
Let us consider two quantum behaviours appearing in an arbitrary contextuality scenario, $\pp, \tilde{\pp} \in \cQ$, such that $p(k|x,y) = \tr( \rho_x M^y_k )$ and $\tilde{p}(k|x,y) = \tr( \tilde{\rho}_x \tilde{M}^y_k )$, where $\rho_x$ and $M^y_k$ are quantum states and measurements on a Hilbert space $\cH$, and $\tilde{\rho}_x$ and $\tilde{M}^y_k$ are quantum states and measurements on a Hilbert space $\tilde{\cH}$, respectively. Furthermore, the quantum preparations $\rho_x$ and $\tilde{\rho_x}$ and the quantum measurement operators $M^y_k$ and $\tilde{M}^y_k$ satisfy operational equivalences of form \eqref{OEPQ}.

Consider an arbitrary convex combination, $\hat{\pp} = \lambda \pp + (1-\lambda) \tilde{\pp}$, where $\lambda \in [0,1]$, i.e.,
\begin{equation}
    \hat{p}(k|x,y) = \lambda p(k|xy) + (1-\lambda)\tilde{p}(k|x,y) ~~ \forall x,y,k.
\end{equation}
We show that $\hat{\pp} \in \cQ$ by providing explicit states and measurements, such that $\hat{p}(k|x,y) = \tr( \hat{\rho}_x \hat{M}^y_k)$. In particular, consider the states,
\begin{equation}
    \hat{\rho}_x = \lambda \rho_x \oplus (1-\lambda) \tilde{\rho}_x \in \cB_+(\cH \oplus \tilde{\cH}),
\end{equation}
and the measurement operators,
\begin{equation}
    \hat{M}^y_k = M^y_k \oplus \tilde{M}^y_k \in \cB_+(\cH \oplus \tilde{\cH}).
\end{equation}
It is straightforward to verify that these direct sums constitute valid quantum states and measurements on $\hat{\cH}=\cH \oplus \tilde{\cH}$, satisfy the same operational equivalence constraints \eqref{OEPQ}, and that they indeed reproduce the desired behaviour, $\hat{p}(k|xy) = \tr( \hat{\rho}_x \hat{M}^y_k)$. 

\end{proof}
%****************************************************************
\begin{prop} \label{QPiisConvex_app}
For all contextuality scenarios the subset of quantum behaviours $\mathcal{Q}^{\Pi}$ is convex.
\end{prop}
%****************************************************************
\begin{proof}
The proof is identical to the proof of the previous observation. Additionally, observe that if the measurement {operators} are {projections}, i.e., $(M^y_k)^2 = M^y_k$ and $(\tilde{M}^y_k)^2 = \tilde{M}^y_k$ for all $y$ and $k$, then the {the measurement operators of the direct sum measurements are also projections}, i.e., $(\hat{M}^y_k)^2 = \hat{M}^y_k$ for all $y$ and $k$. 
\end{proof}
%****************************************************************
\begin{prop} \label{NCequalsCBezPrepEq_app}
For all contextuality scenarios with only trivial equivalences of preparations, $T = (X,Y,{K},\emptyset,\mathcal{OE}_M)$, all behaviours are noncontextual, i.e., $\mathcal{NC}=\mathcal{Q}=\mathcal{C}$. 
\end{prop}
%****************************************************************
\begin{comment}
\vicky{This argument can be made using the hypothetical version of the equivalences so we don't have to rewrite the equivalences here in the other form.}
\out{Recall that in any contextuality scenario, generic measurement equivalences are of the form \eqref{OEM},}
\begin{equation*}
\begin{gathered}
    \forall w\in[W],\forall j\out{,j'}\in[W_w]: \out{E_w} \inn{[k_w|N_w]} \simeq \sum_{k\in[K],y\in[Y]} \beta^w_j(k,y)[k|M_y]\ \out{\simeq \sum_{k\in[K],y\in[Y]} \beta^w_{j'}(k,y)[k|M_y]},
\end{gathered}
\end{equation*}
\out{where the index $w$ iterates over the mixed hypothetical effects} $\{\out{E_w} \inn{[k_w|N_w]}\}_{w\in[W]}$\out{, for a given hypothetical effect $E_w$ the indices $j,j'$ then iterate over the $W_w$ distinct convex decompositions of $E_w$, and $\beta^w_j$ is a probability distribution over $[K]\times [Y]$ specifying the convex decomposition, for each $w\in [W]$ and each $j,j'\in[W_w]$.}    
\out{Consequently, any valid observed behaviour $\boldsymbol{p}$ must satisfy the constraints \eqref{OEMO},}
\begin{equation}
    \forall x\in [X],w\in[W],\forall j\out{,j'}\in[W_w]: \ e_w=\sum_{k\in[K],y\in [Y]} \beta^w_j(k,y)p(k|x,y)\out{  = \sum_{k\in[K],y\in [Y]}\beta^w_{j'}(k,y)p(k|x,y)},
\end{equation}
\vicky{Proof can start here:}
\end{comment}
\begin{proof}
Let $\mathbf{p}^{\cC}\in\mathcal{C}$ {in scenario $T$} be a generic contextual behaviour with elements $p^\cC(k|x,y)$. Firstly, consider the quantum strategy in which $\rho_x=\ketbra{x}$ where $\{\ket{x}\}_{x\in[X]}$ is an orthonormal basis of $\mathbb{C}^X$ and $M^y_k=\sum_{x\in[X]}p{^\cC}(k|x,y)\ketbra{x}$. Since each $0\leq p{^\cC}(k|x,y)\leq 1$ we have that $0\leq M^y_k\leq \mathbb{I}$, and since $\sum_{k\in[K]}p^\cC(k|x,y)=1$ we find $\sum_{k\in[K]}M^y_k=\mathbb{I}$ for all $x\in[X]$ and $y\in[Y]$. Moreover, it is straightforward to verify that this strategy reproduces the given behaviour. Finally, noting that since $\mathbf{p}^\cC\in\cC$ and therefore satisfies Eq.~\eqref{OEPO}, we find that these measurements satisfy operational equivalence conditions {\eqref{OEPQ}} as,
{\begin{equation}\sum_{k\in[K],y\in [Y]} \beta^w_j(k,y)M^y_k =  \sum_{x\in[X]} \left(\sum_{k\in[K],y\in [Y]}\beta^w_j(k,y) p{^\cC}(k|x,y)\right)\ketbra{x}=\sum_{x\in[X]}e_w\ketbra{x}\equiv F_w,\end{equation} 
for all $w\in[W]$ and all $j\in[W_w]$.} 
{Thus we have shown} $\mathcal{Q}=\mathcal{C}$.

{Similarly, we may show $\mathcal{NC}=\cC$ as follows. Consider an ontological model with a discrete ontic state space $\Lambda\equiv[X]$, and let $\mu_x$ for $x\in[X]$ be epistemic states such that $\mu_x(x) = 1$, for all $x \in [X]$. Furthermore, consider the response schemes $\xi_{M_y}(k|x)=p^\cC(k|x,y)$, for all $x\in[X]=\Lambda$, $k\in[K]$ and $y\in [Y]$. Clearly, this strategy reproduces the observed behaviour as $\sum_{\lambda\in[X]} \mu_x(\lambda)\xi_{M_y}(k|\lambda)=p^\cC(k|x,y)$. Finally, we find this strategy satisfies the measurement equivalences in \eqref{OEPOEMMOntic} since 
\begin{equation}
\sum_{k\in[K],y\in [Y]} \beta^w_j(k,y) \xi_{M_y}(k|x)  = \sum_{k\in[K],y\in [Y]} \beta^w_j(k,y) p^\cC(k|x,y)=e_w, 
\end{equation} 
for all $x\in[X]$, $w\in[W]$ and all $j\in[W_w]$.}
\end{proof}
%****************************************************************
\begin{prop} \label{NCequalsCLessThanThreePreps_app}
For all contextuality scenarios with three or fewer preparation procedures, $T = (X \le 3,Y,{K},\mathcal{OE}_P,\mathcal{OE}_M)$, all behaviours are noncontextual, i.e., $\mathcal{NC}=\mathcal{Q}=\mathcal{C}$. 
\end{prop}
%****************************************************************

\begin{proof}
Notice that in such contextuality scenarios, up to relabelling, there is only one non-trivial family of sets of operational equivalences of preparations when $X=3$, namely, 
\begin{equation} \label{OEMThree}
P_2\simeq \alpha P_0+(1-\alpha)P_1
\end{equation}
for some $\alpha\in [0,1]$. {We will show the result directly, however note that it could also be shown using Observation \ref{NCequalsCBezPrepEq} and showing that since one of the preparations is a convex combination of the other two, the scenario reduces to that with two preparations and no preparation equivalences.}

Given a behaviour $\boldsymbol{p{^\cC}}\in\mathcal{C}$, one can verify that the quantum preparations represented by the density matrices $\rho_0=\ketbra{0}$, $\rho_1=\ketbra{1}$ and $\rho_2=\alpha\rho_0+(1-\alpha)\rho_1$ and the operators $M^y_k=p{^\cC}(k|0,y)\ketbra{0}+p{^\cC}(k|1,y)\ketbra{1}$ satisfying $0\leq M^y_k \leq \mathbb{I}$, and $\sum_{k\in[K]}M^y_k=\mathbb{I}$, reproduces the observed behaviour as $\tr(\rho_xM^y_k)=p{^\cC}(k|x,y)$. Moreover, these quantum preparations satisfy operational equivalences \eqref{OEMThree}. Hence, we have shown $\mathcal{Q}=\mathcal{C}$. 

Now consider an ontological model, {with} a discrete ontic state-space ${\Lambda}\equiv{ \{0, 1\} }$, {three} epistemic states {$ \mu_0 (0) = 1$, $\mu_1(0) = 0$, $\mu_2(0) = \alpha$}, and {$Y$ measurements, $M_y$, with }response schemes {$\xi_{M_y}(k|x)=p^\cC(k|x,y)$, for all $x\in\Lambda$ and $k\in[K]$}. This model not only reproduces the observed behaviour {as $\sum_{\lambda\in\{0,1\}}\mu_x(\lambda)\xi_{M_y}(k|\lambda)=p^\cC(k|x,y)$}, but the epistemic states also satisfy the {operational equivalences implied by \eqref{OEMThree}}. This implies that $\mathcal{NC}=\mathcal{C}$, which concludes the proof.
\end{proof}

\begin{prop} \label{QEqualsQPi_app}
For all contextuality scenarios with only trivial equivalences of measurement effects $T\equiv (X,Y,{K},\mathcal{OE}_P,\emptyset )$, projective measurements are sufficient to recover the quantum set, i.e., $\mathcal{Q}=\mathcal{Q}^\Pi$.
\end{prop}

\begin{proof}
For every quantum behaviour, $\mathbf{p} \in \mathcal{Q}$, appearing in an arbitrary contextuality scenario of the form $(X,Y,{K},\mathcal{OE}_P,\emptyset)$, there exist density operators $\{ \rho_x \}_{x=0}^{X-1}$ satisfying the operational constraints $\mathcal{OE}_P$ of the form \eqref{OEPQ}, and POVMs $\big\{ \{ M^y_k \}_{k = 0}^{K-1} \big\}_{y=0}^{Y-1}$, on a finite-dimensional Hilbert space $\mathcal{H}$, such that
\begin{equation}
    p(k|x,y) = \tr( \rho_x M^y_k )
\end{equation}
for all $x$, $y$, $k$. Starting from this construction, we show that there also exists a set of density operators $\{ \hat{\rho}_x \}_{x=0}^{X-1}$ satisfying the same operational constraints $\mathcal{OE}_P$, and \textit{projective} measurements $\big\{ \{ P^y_k \}_{k = 0}^{K-1} \big\}_{y=0}^{Y-1}$, on a finite-dimensional Hilbert space $\mathcal{K}$, such that
\begin{equation}
    p(k|x,y) = \tr( \hat{\rho}_x P^y_k )
\end{equation}
for all $x$, $y$, $k$.

Our main technical tool is Naimark's dilation theorem.
\begin{thm}[Naimark dilation theorem]
For every POVM $\{ M_k \}_{k = 0}^{K-1}$ on a finite-dimensional Hilbert space $\mathcal{H}$, there exists another finite-dimensional Hilbert space $\mathcal{K}$, an isometry $V: \mathcal{H} \to \mathcal{K}$ and a projective measurement $\{ P_k \}_{k = 0}^{K-1}$ on $\mathcal{K}$, such that for every state $\rho$ on $\mathcal{H}$, we have that
\begin{equation}
    \tr( \rho M_k ) = \tr( V \rho V^\dagger P_k ).
\end{equation}
\end{thm}

Let us apply Naimark's dilation theorem to the first measurement, $\{ M^0_k \}_{k = 0}^{K-1}$. In particular, there exists a finite-dimensional Hilbert space $\mathcal{K}_0$, an isometry $V_0: \mathcal{H} \to \mathcal{K}_0$, and a projective measurement $\{ P^0_k \}_{k = 0}^{K-1}$ on $\mathcal{K}_0$, such that
\begin{equation}
    \tr( \rho_x M^0_k ) = \tr( V_0 \rho_x V_0^\dagger P^0_k)
\end{equation}
for all $x$, $k$. Following from the isometry property, $V_0^\dagger V_0 = \mathbb{I}_\mathcal{H}$, it follows that for the rest of the measurement operators we have
\begin{equation}
    \tr( \rho_x M^y_k ) =
    \tr( V_0 \rho_x V_0^\dagger V_0 M^y_k V_0^\dagger ) 
\end{equation}
for all $x$ and $k$, and for all $y = 1, 2, \ldots, Y-1$.

Let us now define $\rho_x^{(0)} \equiv V_0 \rho_x V_0^\dagger$. It is clear that $\rho_x^{(0)} \ge 0$ and $\tr( \rho_x^{(0)} ) = 1$, and therefore it is a valid state on $\mathcal{K}_0$. Note that in order to define new measurement operators, it is not enough to consider the operators $V_0 M^y_k V_0^\dagger$, since the sum of these over $k$ is $V_0V_0^\dagger$, which is not equal to the identity in general. Let us therefore define $M^{y,(0)}_k \equiv V_0 M^y_k V_0^\dagger + \mathbb{I}_{\mathcal{K}_0} - V_0V_0^\dagger$ for all $y = 1, 2, \ldots, Y-1$ and $M^{0,(0)}_k \equiv P^0_k$. It is straightforward to verify that all $M^{y,(0)}_k \ge 0$ (since they are sums of positive semidefinite operators), and that $\sum_k M^{y,(0)}_k = \mathbb{I}_{\mathcal{K}_0}$ for all $y$. Moreover, it is also easy to see that
\begin{equation}
    p(k|x,y) = \tr( \rho_x^{(0)} M^{y,(0)}_k )
\end{equation}
for all $x$, $y$, $k$, and by construction, the measurement $\{ M^{0,(0)}_k \}_{k = 0}^{K-1}$ is projective.

In the next step, we apply Naimark's dilation theorem to the measurement $\{ M^{1,(0)}_k \}_{k = 0}^{K-1}$. In particular, there exists a finite-dimensional Hilbert space $\mathcal{K}_1$, an isometry $V_1 : \mathcal{K}_0 \to \mathcal{K}_1$ and a projective measurement $\{ P^1_k \}_{k = 1}^{K - 1}$ on $\mathcal{K}_1$, such that
\begin{equation}
    \tr( \rho_x^{(0)} M^{1,(0)}_k ) = \tr( V_1 \rho_x^{(0)} V_1^\dagger P^1_k )
\end{equation}
for all $x$, $k$. Let us now define the states $\rho_x^{(1)} \equiv V_1 \rho_x^{(0)} V_1^\dagger$ and the measurement operators $M^{1,(1)}_k \equiv P^1_K$ and $M^{y,(1)}_k \equiv V_1 M^{y,(0)}_k V_1^\dagger + \mathbb{I}_{\mathcal{K}_1} - V_1 V_1^\dagger$ for all $y \neq 1$, on the Hilbert space $\mathcal{K}_1$. It is straightforward to verify that that $\{M^{0,(1)}_k \}_{k=0}^{K-1}$ is still a projective measurement, and that with this notation we have that
\begin{equation}
    p(k|,x,y) = \tr( \rho_x^{(1)} M^{y,(1)}_k )
\end{equation}
for all $x$, $y$, $k$, where the measurements $\{ M^{0,(1)}_k \}_{k = 0}^{K-1}$ and $\{ M^{1,(1)}_k \}_{k = 0}^{K-1}$ are projective.

It is clear now that the Naimark dilation theorem can be applied successively until every measurement is projective. Formally, for every $n = 0, \ldots, Y - 1$, we recursively define the measurement operators
\begin{equation}
    M^{y,(n)}_k \equiv
    \begin{cases}
    P^n_k & \text{if } y = n \\
    V_n M^{y,(n-1)}_k V_n^\dagger + \mathbb{I}_{\mathcal{K}_n} - V_n V_n^\dagger & \text{if } y \neq n,
    \end{cases}
\end{equation}
acting on the Hilbert space $\mathcal{K}_n$, which, together with the projective measurement $\{ P^n_k \}_{k = 0}^{K-1}$, and the isometry $V_n: \mathcal{K}_{n-1} \to \mathcal{K}_n$, describes the Naimark dilation of the measurement $\{M^{n,(n-1)}_k\}_{k=0}^{K-1}$. In order to be able to start the recursive definition, we set $M^{y,(-1)}_k \equiv M^y_k$ for all $y$, $k$ and $\mathcal{K}_{-1} \equiv \mathcal{H}$. Similarly, we recursively define the states
\begin{equation}
    \rho_x^{(n)} \equiv V_n \rho_x^{(n-1)} V_n^\dagger
\end{equation}
on the Hilbert space $\mathcal{K}_n$, and set $\rho_x^{(-1)} \equiv \rho_x$. It follows that the measurements $\big\{ \{ M^{y,(Y-1)}_k \}_{k=0}^{K-1} \big\}_{y=0}^{Y-1}$ are all projective on the Hilbert space $\mathcal{K}_{Y-1}$, and the states $\rho_x^{(Y-1)}$, on the same Hilbert space, satisfy the operational constraints $\mathcal{OE}_P$, due to the linearity of the transformation $\rho_x \mapsto \rho_x^{(Y-1)} \equiv V_{Y-1} V_{Y-2} \cdots V_0 \rho_x V_0^\dagger \ldots V_{Y-2}^\dagger V_{Y-1}^\dagger$. Moreover, we have that
\begin{equation}
    p(k|x,y) = \tr( \rho_x^{(Y-1)} M^{y,(Y-1)}_k )
\end{equation}
for all $k$, $x$, and $y$, which concludes the proof.
\end{proof}

%****************************************************************
\begin{prop} \label{QNotEqualsQPsi_app}
In general, for contextuality scenarios with non-trivial equivalences of preparations, pure states cannot produce all quantum contextual behaviours, i.e., $\mathcal{Q}^\Psi \subsetneq \mathcal{Q}$.
\end{prop}
%****************************************************************
\begin{proof}
In the simplest non-trivial contextuality scenario $(4,2,2,\mathcal{OE}_P,\emptyset)$, consider the $(2,2)$-\textit\textit{} setup with an operational equivalence of the form $\frac{1}{2}P_{00}+\frac{1}{2}P_{11}\simeq P_{01}$. Now, if we demand that the involved preparations are pure quantum states, i.e.,{ $\rho^2_{x_0,x_1}=\rho_{x_0,x_1}$ for all $x_0,x_1\in\{0,1\}$,} it is straightforward to see that the constraints implied by the given operational equivalence can only be satisfied when the three quantum preparations featuring in the operational equivalence condition are identical. This observation in turn leads to the upper bound on the performance of pure states in the task, $S^{\textit{rac}}_{\cQ^{\Psi}}=\frac{3}{4}$, {saturated by} the strategy, $\rho_{1,0}=\ketbra{0}$ and $\rho_{1,1}=\rho_{0,0}=\ketbra{1}$ {and $M^0_0=M^1_1=\ketbra{1}$}. {One can see that this strategy is indeed optimal from} the upper bound retrieved from the first level of our pure-state based relaxation, $S^{\textit{rac}}_{\cQ^\Psi_1}=\frac{3}{4}${, or analytically}.
Now, the noncontextual bound, quantum lower bound, and the upper bound obtained via the first level of our projection-based and unitary-based hierarchy are identical $S^\textit{rac}_{\mathcal{NC}}=S^\textit{rac}_{\cQ_L}=S^\textit{rac}_{\cQ^{\Pi}_1}=\frac{7}{8}$, {showing that $S^\textit{rac}_{Q^\Psi}\lneq S^\textit{rac}_{Q}$ and hence $\mathcal{Q}^\Psi \subsetneq \mathcal{Q}$}.

 {Alternatively, we obtain a numerical proof by considering the scenario $(4,2,2,\mathcal{OE}_P(\alpha),\emptyset)$ and noting that there exist non-empty intervals of the value $\alpha$ for which $S^\textit{rac}_{\cQ^\Psi_1}<S^\textit{rac}_{Q}$ as shown in Fig. \ref{batman}}. 
\end{proof}
%****************************************************************
\begin{prop} \label{QNotEqualsQPi_app}
In general, for contextuality scenarios with non-trivial equivalences of measurement effects, projective measurements cannot produce all quantum contextual behaviours, i.e., $\mathcal{Q}^\Pi \subsetneq \mathcal{Q}$.
\end{prop}
%****************************************************************
\begin{proof}
{Consider} the contextuality scenario $(3^2,2,3,\mathcal{OE}_P,\mathcal{OE}_M)$ {motivated by the } $(2,3)$-\textit{mporac} {task}, a generalization of {the} $(2,2)$-\textit{mporac}, wherein Alice is encoding two trits $x_0,x_1 \in [3]$. Operational equivalences of preparations are such that the parity trit $(x_0+x_1) \mod{3}$ remains secret{, explicitly the hypothetical preparations $\frac13(P_{a_j}+P_{b_j}+P_{c_j})$ are equivalent for all $j\in\{0,1,2\}$ where $a_j,b_j,c_j$ are the three pairs of trits summing to $j$}. 
Additionally, Bob has two three outcome measurements {with operators} $\{M^y_k\}^{1,2}_{y=0,k=0}$ which satisfy the following equivalence conditions of measurement effects,
\begin{equation} \label{mporac23}
    \frac{1}{2}[0|M_0] + \frac{1}{2}[0|M_1] \simeq \frac{1}{2}[1|M_0] + \frac{1}{2}[1|M_1] \simeq \frac{1}{2}[2|M_0] + \frac{1}{2}[2|M_1]. 
\end{equation}
We shall now demonstrate that the operational equivalences \eqref{mporac23} do not allow for any projective measurements, rendering the set of quantum contextual behaviour obtained when the individual quantum measurements are projective $\mathbb{Q}^\Pi$ to be $\emptyset$. It is straigtforward to see that the operational equivalences \eqref{mporac23} necessitate the following constraints on quantum measurement effects, 
\begin{equation} \label{23mporac}
    \frac{1}{2}\left(
                \begin{array}{ll}
                  M^0_0+M^1_0\\
                  M^0_1+M^1_1\\
                  M^0_2+M^1_2
                \end{array}
    \right) \simeq
    \left(
                \begin{array}{ll}
                  \frac{\mathbb{I}}{3}\\
                  \frac{\mathbb{I}}{3}\\
                  \frac{\mathbb{I}}{3}
                \end{array}
    \right).
\end{equation}
Let us now assume that each $M^y_k$ is a projection, {i.e.,} $(M^y_k)^2=M^y_k$. Now from the equation \eqref{23mporac} we have $M^1_k = \frac{2\mathbb{I}}{3}-M^0_k$, for all $k \in{\{0,1,2\}}$, which leads us to,
\begin{eqnarray} \nonumber
& (M^1_k)^2 & =  \frac{4\mathbb{I}}{9}-\frac{4M^0_k}{3}+(M^0_k)^2, \\
& & = \frac{4\mathbb{I}}{9}-\frac{M^0_k}{3},
\end{eqnarray}
where the second equality follows from the projectivity assumption $(M^0_k)^2=M^0_k$. {On the other hand} the assumption $(M^1_k)^2=M^1_k$ implies   $\frac{2\mathbb{I}}{3}-M^0_k=\frac{4\mathbb{I}}{9}-\frac{M^0_k}{3}$ which yields $M^0_k=\frac{\mathbb{I}}{3}$ for any $k\in {\{0,1,2\}}$, which is not a projection, and therefore contradicts our initial projectivity assumption. {The maximal noncontextual success probability of $(2,3)$-\textit{mporac} is greater than $\frac{1}{3}$, i.e., $S_\mathcal{NC} = \frac{1}{2}$.} Finally, {there exist POVMs satisfying the constraints \eqref{23mporac}, such as those} depicted in Fig. \ref{2to13}{. The states and POVMs in Fig. \ref{2to13} show that $\cQ$ is non-empty} which concludes the proof (the strategy also surpass{es} the noncontextual bound $S_\mathcal{NC} = \frac{1}{2}$ and achieve a success probability of $S_{\mathcal{Q}_L} \approx 0.5257834$, while the first level of our unitary-based hierarchy returns the bound $S_{\mathcal{Q}_1}=0.5555555$). 

We stumbled upon another numerical proof for the desired thesis while testing our relaxations on the facet inequalities in the contextuality scenario
$(6,3,2,\mathcal{U}_P,\mathcal{U}_M)$ from \cite{PhysRevA.97.062103}. Specifically, the seventh inequality in Table \ref{SchidtSpekkens} (which was listed as a facet inequality in \cite{PhysRevA.97.062103} due to a typing error) has a lower value when we restrict ourselves to projective measurements $S_{\mathcal{Q}^\Pi_1}=S_{\mathcal{Q}^\Pi_L}=3.4641016$, as compared to even the noncontextual bound $S_\mathcal{NC}=\frac{7}{2}$, and to the quantum lower bound $S_{\mathcal{Q}_L}=\frac{7}{2}$, and the bounds obtained from the first and second level of our unitary based relaxation $S_{\mathcal{Q}_1}=3.5552760$ and $S_{\mathcal{Q}_2}=3.5$.

\begin{figure}
    \centering
    \includegraphics[width=0.5\textwidth]{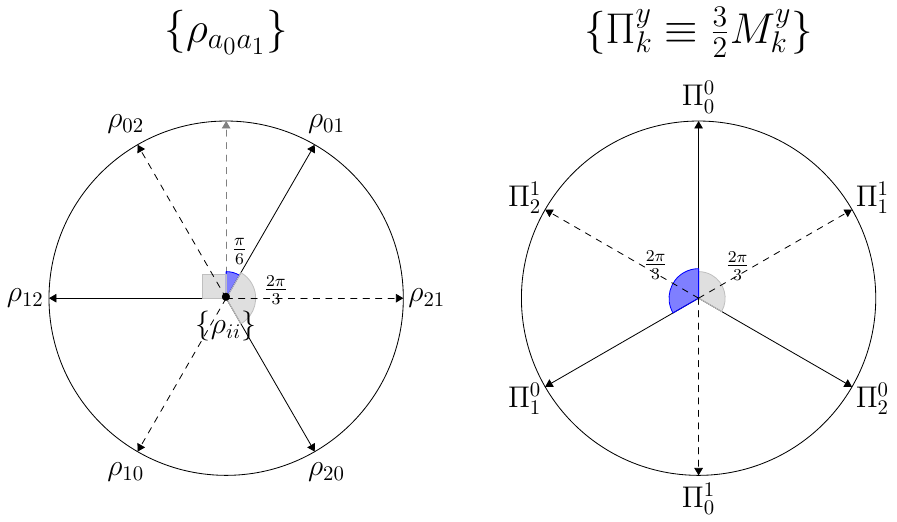}
    \caption{\label{2to13} A quantum strategy for $(2,3)$-\textit{mporac} task, in the contextuality scenario $(3^2,2,3,\mathcal{OE}_P,\mathcal{OE}_M)$, wherein the nine preparation settings are indexed by two trits $x_0,x_1 \in [3]$, the set $\mathcal{OE}_P$ entails operational equivalences of preparations such that the parity trit $(x_0+x_1)\mod 3$ remains a secret, and $\mathcal{OE}_M$ entails operational equivalences of measurement effects of the form \eqref{mporac23}. This strategy violates the noncontextual bound and attains $S^\textit{rac}_{\mathcal{Q}_L} = 0.5257834 > S^\textit{rac}_\mathcal{NC} = 0.5 > S^\textit{rac}_{\mathcal{Q}^\Pi} = \frac{1}{3}$. Here, the displayed {projections} (right) $\Pi^y_k$ are {scaled down} to {become measurement operators} $M^y_k=\frac{2}{3}\Pi^y_k$ for all $y\in[2]$ and $k\in[3]$, which allows for the violation of the noncontextual bound, and a success metric gain of $S^\textit{rac}_{\mathcal{Q}_L} \approx 0.5257834-0.3333333 =0.1924501$. This constitutes a part of the proof of Observation \ref{QNotEqualsQPi_app}, wherein we show that a any strategy wherein all effects are projections is simply not allowed in this task, thereby demonstrating that in certain universal contextuality scenarios unsharp measurements can outperform the sharp measurements.} 
\end{figure}
\end{proof}
%****************************************************************
\begin{lem} \label{unitarylemma_app} Any operator $0 \le M \le \mathbb{I}$ on a finite dimensional Hilbert space $\mathcal{H}$ can be written as $M=\frac{\mathbb{I}}{2}+\frac{U+U^\dagger}{4}$, where $U$ is a unitary operator on $\mathcal{H}$.
\end{lem}
%****************************************************************
\begin{proof}
Every such operator $M$ can be written in terms of its spectral decomposition,
\begin{equation}
    M = \sum_{j = 0}^{\dim \cH - 1} \lambda_j \ketbra{j}{j},
\end{equation}
where $\{ \ket{j} \}_{j=0}^{\dim \cH - 1}$ is an orthonormal basis on $\cH$ (the eigenbasis of $M$), and $0 \le \lambda_j \le 1$ for all $j$. Then, consider the unitary operator,
\begin{equation} \nonumber
    U = \sum_{j = 0}^{\dim \cH - 1} \mathrm{e}^{\mathrm{i} \alpha_j} \ketbra{j}{j},
\end{equation}
such that $\lambda_j = \frac12( 1 + \cos \alpha_j)$. Notice that such a set $\{ \alpha_j \}_j$ can always be found for any set $\{ \lambda_j \}_j$ satisfying $0 \le \lambda_j \le 1$ for all $j$. Therefore, it holds that
\begin{equation} \nonumber
    M = \sum_{j = 0}^{\dim \cH - 1} \frac12( 1 + \cos \alpha_j) \ketbra{j}{j} = \sum_{j = 0}^{\dim \cH - 1} \Big[ \frac12 + \frac14 (\mathrm{e}^{\mathrm{i} \alpha_j} + \mathrm{e}^{-\mathrm{i} \alpha_j}) \Big] \ketbra{j}{j} = \frac{\mathbb{I}}{2}+\frac{U+U^\dagger}{4}.
\end{equation}
\end{proof}
%****************************************************************
\begin{lem} \label{mporac}
The maximal noncontextual success probability of $(n,2)$-\textit{mporac} remains the same as the {maximal} noncontextual success probability of {the} $(n,2)$-\textit{porac}, {that is,} $S^\textit{rac}_{\mathcal{NC}}=\frac{1}{2}(1+\frac{1}{n})$.
\end{lem}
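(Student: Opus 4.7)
My plan is to sandwich the noncontextual value between two inequalities. For the upper direction, observe that the $(n,2)$-\textit{mporac} scenario imposes all preparation equivalences of $(n,2)$-\textit{porac} plus additional measurement equivalences. Consequently, every noncontextual behaviour of mporac is also a noncontextual behaviour of porac, so the mporac noncontextual polytope is contained in the porac one; since the success metric is identical in both scenarios and linear, $S_{\mathcal{NC}}^{\text{mporac}} \le S_{\mathcal{NC}}^{\text{porac}} = \tfrac{1}{2}\bigl(1+\tfrac{1}{n}\bigr)$.

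For the matching lower bound, I would exhibit a noncontextual ontological model for mporac achieving the porac value. Take $\Lambda = \{0,\dots,n-1\} \times \{0,1\}$ with the shared-randomness epistemic state $\mu_{P_\mathbf{a}}(i,m) = \tfrac{1}{n}\delta_{m,a_i}$; pairwise parity obliviousness follows from a bit count on the unconstrained input coordinates, since for any pair $\{i,j\}$, fixing $a_i \oplus a_j$ leaves the same number of input strings compatible with each coordinate $(i_0,m)$ regardless of whether the fixed parity is $0$ or $1$. The main technical step is Bob's response scheme, since the naive choice (output $m$ when $i=y$, guess uniformly otherwise) achieves optimal success but violates $\tfrac{1}{n}\sum_y \xi_{0|M^y}(\lambda) = \tfrac{1}{2}$: the matched coordinate $i=y$ is biased toward $m$. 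I would instead set $\xi_{0|M^y}(i,m) = \delta_{m,0}$ for $i=y$ and $\xi_{0|M^y}(i,m) = g(m)$ for $i \ne y$, with $g(0) = \tfrac{n-2}{2(n-1)}$ and $g(1) = \tfrac{n}{2(n-1)}$ chosen so that $\delta_{m,0} + (n-1)g(m) = n/2$; both values lie in $[0,1]$ for $n \ge 2$, so the mporac measurement equivalence is satisfied.

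A direct calculation then confirms the success probability is preserved: the $i=y$ term contributes $1$ (Bob reads $a_y$ perfectly through $m$), and since $g(0)+g(1)=1$, each $i \ne y$ term averages to $\tfrac{1}{2}$ over uniformly random inputs, giving overall success $\tfrac{1}{n}\bigl(1+\tfrac{n-1}{2}\bigr) = \tfrac{1}{2}\bigl(1+\tfrac{1}{n}\bigr)$. The main obstacle is pinning down an asymmetric $g$ that simultaneously symmetrises the aggregate response (for the measurement equivalence) and averages to $\tfrac{1}{2}$ on mismatched coordinates (preserving optimality); this succeeds precisely because the bias introduced by the matched coordinate is affine in $m$ and can therefore be absorbed into a valid probability distribution on $\{0,1\}$.
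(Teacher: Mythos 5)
Your proof is correct and follows essentially the same route as the paper: the upper bound comes from the mporac constraints containing the porac ones, and the lower bound from an explicit noncontextual model in which the matched coordinate is answered deterministically and the mismatched response is the biased coin with weights $\frac{n-2}{2(n-1)}$ and $\frac{n}{2(n-1)}$, exactly the compensation the paper uses to force $\frac{1}{n}\sum_y \xi_{0|M^y}=\frac12$. The only (cosmetic) difference is that the paper stores just $a_0$ on a binary ontic space and answers $y=0$ perfectly, whereas you symmetrise over a uniformly random stored coordinate $i$; both give $S=\frac12\bigl(1+\frac1n\bigr)$.
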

%****************************************************************
\begin{proof}
As the $(n,2)$-\textit{mporac} {has} additional {measurement equivalence constraints} compared to the standard $(n,2)$-\textit{porac}---which has the noncontextual bound $\frac{1}{2}(1+\frac{1}{n})$-- we have {that} $S^\textit{rac}_{\mathcal{NC}}\leq \frac{1}{2}(1+\frac{1}{n})$. All we need to prove is that $\frac{1}{2}(1+\frac{1}{n})$ is a viable success probability in presence of the additional operational equivalences of measurement effects. Let us consider a binary ontic state space $\Lambda={ \{0,1\} }$. {Further, we consider the epistemic states that simply encode} the first bit of Alice, {that is,} $\mu_x(x_0)=1$, and are consequently noncontextual. {Finally, let the response schemes be given by} $\xi{_{M_0}}(\lambda|\lambda)=1$, and $\xi{_{M_y}}(\lambda|\lambda)=\frac{n-2}{2(n-1)}$ {for all $y\neq0$}. These response schemes satisfy the {additional} operational equivalences {of the $(n,2)$-\textit{mporac} task}, specifically, $\frac{1}{n}\sum_{y\in[n]} \xi{_{M_y}}(0|{.}) = \frac{1}{n}\sum_{y\in[n]} \xi{_{M_y}}(1|.) = \frac{1}{2}$, and hence are noncontextual.
It is straightforward to verify that these epistemic states and response schemes achieve the desired success probability $S^{rac}_{\mathcal{NC}}=\frac{1}{2}(1+\frac{1}{n})$, which concludes the proof.
\end{proof}
\clearpage
\section{Tutorial code}\label{app:tutorial}
 As a demonstration of the ease of implementation of the relaxations presented in this work, we provide a basic verbose code for the universal contextuality scenario $(6,3,2,\mathcal{U}_P,\mathcal{U}_M)$, where we have the following operational equivalences,
\begin{align}
\label{OEP632}
 \frac{1}{2}(P_{0}+P_{1}) \simeq \frac{1}{2}(P_{2}&+P_{3})\simeq \frac{1}{2}(P_{4}+P_{5}),   \\ \label{OEM632} 
\frac{1}{3}([0|M_0]+[0|M_1]+[0|M_2]) &\simeq \frac{1}{3}([1|M_0]+[1|M_1]+[1|M_2]).
\end{align}
Here we will be employing \textsc{Matlab} in conjunction with a standard optimization toolbox \href{https://github.com/yalmip/YALMIP}{\textsc{YALMIP}}, and \href{https://github.com/sqlp/sdpt3}{\textsc{SDPT3}} as our SDP solver. 

For this scenario we will employ the operator sequence {$\mathcal{O}=(\mathbb{I})\mathbin\Vert (U^y_k,{U^y_k}^\dagger)^{Y-1,K-2}_{y=0,k=0}$}. 
We begin by declaring the specifications of the prepare-and-measure experiment, the total number of operators, empty arrays for the constraints, a probability cell for storing the observed behaviour, an empty array to for the results, and $X$ SDP variables, 
\begin{lstlisting}
X = 6; % six settings for the preparation device
Y = 3; % three settings for the measurement device
K = 2; % binary outcomes
O = 2*Y*(K-1)+1; % total number of operators in our list
conP = []; % an empty list for constraints on the moment matrix level
conM = []; % an empty list for constraints on the substrate level
Prob = cell(X,Y,K-1); % a cell for probabilities
S = []; % an empty list for upper bounds
G = cell(X,1); % cell for X moment matrices
\end{lstlisting}
It is a feature of our formulation that one can easily segregate constraints on abstract moment matrices, from those that depend specifically on the operators in the operator list. The former includes the following constraints that ensue from the positive semidefiniteness of the quantum states, and the equivalence conditions of the preparations \eqref{OEP632}, 
\begin{lstlisting}
for x = 0:X-1
    G{x+1}=sdpvar(O,O,'hermitian','complex'); % declaration of our SDP variables
    conP = [conP;G{x+1} >= 0]; % semi-definite constraints
end
conP = [conP; G{1} + G{2} == G{3} + G{4}; G{1} + G{2} == G{5} + G{6}]; % preparation equivalences 
\end{lstlisting}
Now, let us collect constraints that are specific to the operator list we are employing.
First, we create an indexing function which returns the position of the operator in each moment matrix, given the operator specifiers,
\begin{lstlisting}
idx = @(y, k, u) 2*(K-1)*y + 2*k + u + 2; % function to return the position of the operators 
\end{lstlisting} 
where $U^{y,u=0}_k=U^{y}_k$, and $U^{y,u=1}_k={U^{y}_k}^\dagger$.
Let us now specify how the operators $\{{U^y_k}^\dagger\}^{Y-1,K-2}_{y=0,k=0}$ are related to $\{{U^y_k}\}^{Y-1,K-2}_{y=0,k=0}$, along with the constraints that ensue from the unitarity of the operators in our list,

\begin{lstlisting}
for x = 0:X-1
    for y = 0:Y-1
        for k = 0:K-2 
            conM = [conM; G{x+1}(1,idx(y,k,0)) == G{x+1}(idx(y,k,1),1)]; 
            conM = [conM; G{x+1}(idx(y,k,0),1) == G{x+1}(1,idx(y,k,1))];
        end
    end
    for j = 1:O
        conM = [conM; G{x+1}(j,j) == 1]; % unitarity constraints
    end
end
\end{lstlisting}

The measurement equivalences \eqref{OEM632} amount to additional constraints on the unitary operators, specifically, 
\begin{equation}
    \sum^{Y-1}_{y=0}(U^y_0+{U^y_0}^\dagger) = 0
\end{equation}
which can be enforced in the following way, specific to our operator list,
\begin{lstlisting}
for x = 0:X-1
    for j = 1:O
        sum1 = 0; sum2 = 0; 
        for y = 0:Y-1
            for k = 0:K-2 
                sum1 = sum1 + G{x+1}(j,idx(y,k,0)) + G{x+1}(j,idx(y,k,1));
                sum2 = sum2 + G{x+1}(idx(y,k,0),j) + G{x+1}(idx(y,k,1),j);
            end
        end
        conM = [conM; sum1 == 0; sum2 == 0]; % measurement equivalences 
    end
end
\end{lstlisting}
We can now collect the observed data, $p(k|x,y) = \frac{1}{2}+\frac{1}{4}(\tr{\rho_x U^k_y} + \tr{\rho_x {U^k_y}^\dagger})$,
\begin{lstlisting}
for x = 0:X-1
    for y = 0:Y-1
        for k = 0:K-2
            Prob{x+1,y+1,k+1} = 0.5 + 0.25 * (G{x+1}(1, idx(y,k,0)) + G{x+1}(1, idx(y,k,1)));
        end
    end
end
\end{lstlisting}
Now we are  {in the position} to retrieve quantum bounds on facet inequalities in Table \ref{SchidtSpekkens} and in \cite{schmid2019characterization},
\begin{lstlisting}
S1 = real(Prob{1,1,1} + Prob{3,2,1} + Prob{5,3,1});
diagnostics = optimize([conP;conM], -S1, sdpsettings('solver', 'sdpt3'));
S = [S;value(S1)];
%-------------------------------------------------------------------------------  
S2 = real(Prob{1,1,1} + Prob{2,2,1} + Prob{5,3,1});
diagnostics = optimize([conP;conM], -S2, sdpsettings('solver', 'sdpt3'));
S = [S;value(S2)];
%------------------------------------------------------------------------------- 
S3 = real(Prob{1,1,1} - Prob{3,1,1} -2 * Prob{5,1,1} -2 * Prob{2,2,1} + 2 * Prob{3,2,1} + 2 * Prob{5,3,1});
diagnostics = optimize([conP;conM], -S3, sdpsettings('solver', 'sdpt3'));
S = [S;value(S3)];
%------------------------------------------------------------------------------- 
S4 = real(2* Prob{1,1,1} - Prob{2,2,1} +2* Prob{3,2,1}); 
diagnostics = optimize([conP;conM], -S4, sdpsettings('solver', 'sdpt3'));
S = [S;value(S4)];
%------------------------------------------------------------------------------- 
S5 = real(Prob{1,1,1} - Prob{5,1,1} +  Prob{2,2,1} + Prob{3,2,1} + 2 * Prob{5,3,1});
diagnostics = optimize([conP;conM], -S5, sdpsettings('solver', 'sdpt3'));
S = [S;value(S5)];
%------------------------------------------------------------------------------- 
S6 = real(Prob{1,1,1} - Prob{5,1,1} +  2*Prob{2,2,1} + 2 * Prob{5,3,1});
diagnostics = optimize([conP;conM], -S6, sdpsettings('solver', 'sdpt3'));
S = [S;value(S6)];
%------------------------------------------------------------------------------- 
S7 = real(Prob{1,1,1} - Prob{4,1,1} -2 * Prob{5,1,1} -2 * Prob{2,2,1} + 2 * Prob{3,2,1} + 2 * Prob{5,3,1});
diagnostics = optimize([conP;conM], -S7, sdpsettings('solver', 'sdpt3'));
S = [S;value(S7)];
\end{lstlisting}
where the first six expressions correspond to the facet inequalities of the noncontextual polytope in this scenario (see Table \ref{SchidtSpekkens}), while the seventh expression from equation $(26)$ in \cite{schmid2019characterization} is the third facet inequality with a typing error. {W}e can display the results by
\begin{lstlisting}
format long;
disp(S)
>>3.000000003034520
>>2.866025404556603
>>3.920951869054678
>>3.366025404977135
>>4.688901060747579
>>4.645751315302937
>>3.555276067392406
\end{lstlisting}

\clearpage
\section{Optimal states and measurements}
\begin{figure}[h]
    \centering
    \includegraphics[width=0.63\linewidth]{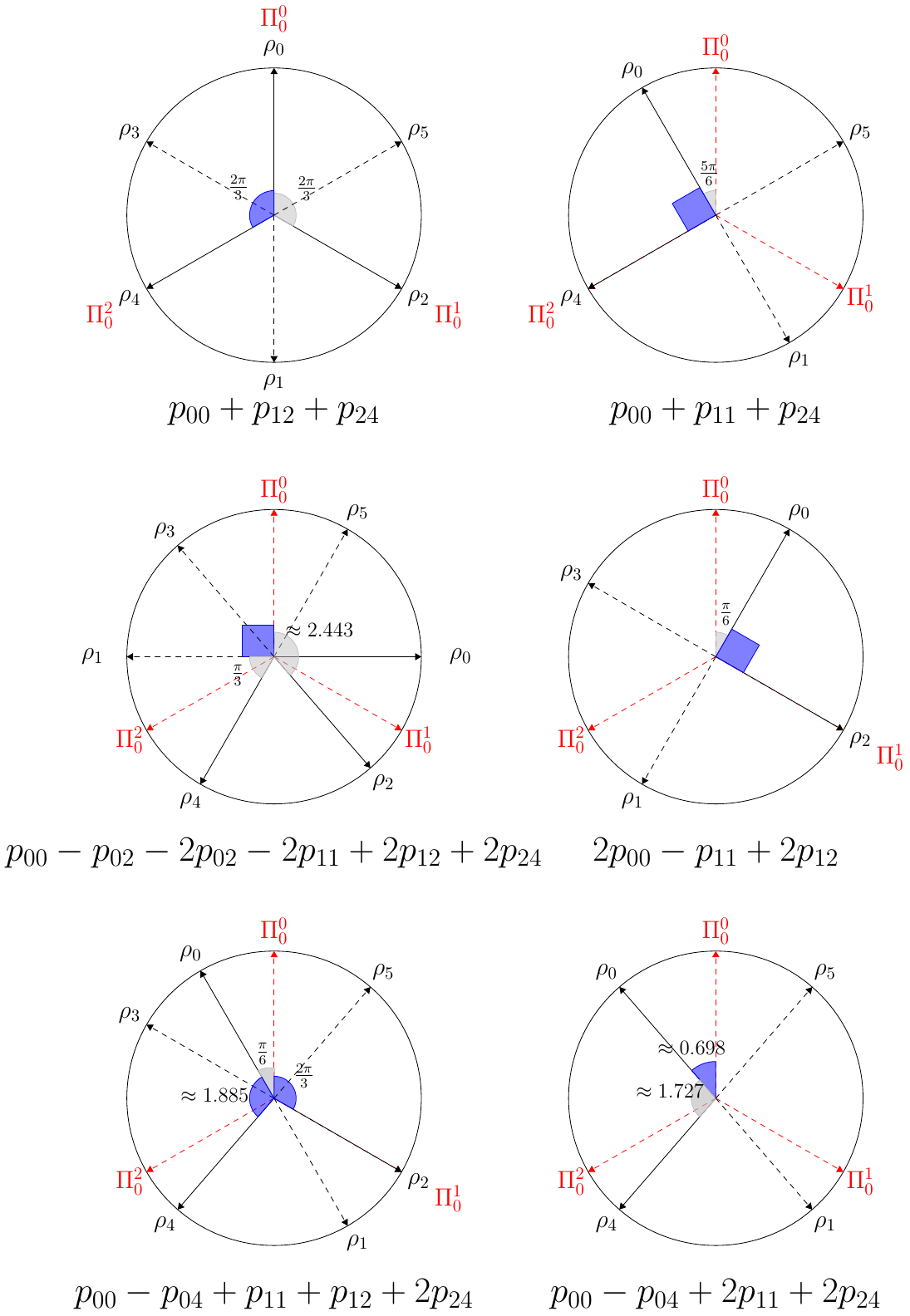}
    \caption{\label{StatesAndMeas} Quantum protocols that attain maximal quantum values (upto machine precision) for all six facet inequalities in the contextuality scenario $(6,3,2,\mathcal{U}_P,\mathcal{U}_M)$, where $p_{yx}=p(0|x,y)$. The states and the measurements are depicted on a plane of the Bloch sphere. In each case, the states $\rho_0,\rho_2,\rho_4$ are depicted with solid black arrows, the states $\rho_1,\rho_3,\rho_5$ are represented by dashed black arrows, and the projections $\Pi^0_0,\Pi^1_0,\Pi^2_0$ are represented by dashed red arrows. In each case, $\rho_0+\rho_1=\rho_2+\rho_3=\rho_4+\rho_5=\mathbb{I}$, {according to} $\mathcal{U}_P$. Observe that for the second, fourth and sixth inequalities, we have skipped pairs of states that do not feature in the expression of the respective facet inequalities, as these can be arbitrary, as long as they add up to $\mathbb{I}$. Finally, the measurements remain unaltered for all six facet inequalities, {that is,} $\Pi^0_0+\Pi^1_0+\Pi^2_0 = \frac{3\mathbb{I}}{2}$, {according to} $\mathcal{U}_M$.}
\end{figure}

\end{document}